\newcommand{\CStar}[1]{c^*_{#1}}
\newcommand{\SStar}[1]{s^*_{#1}}
\newcommand{\pgeq}{\succeq}
\newcommand{\pleq}{\preceq}
\newcommand{\lift}{\mathfrak{L}}
\newcommand{\bbi}{\mathds{1}} 
\newcommand{\OPT}{\textsc{Opt}} \newcommand{\Opt}{\OPT}
\newcommand{\Eig}{\textsc{Eig}}
\newcommand{\GW}{\textsc{Sdp}}
\newcommand{\SDP}{\GW}
\newcommand{\GWDual}{\textsc{Sdp-Dual}}
\newcommand{\GenEig}{\textsc{PartSdp}}
\newcommand{\GenEigDual}{\textsc{PartSdp-Dual}}
\newcommand{\geneig}{partitioned SDP\xspace}
\newcommand{\Geneig}{Partitioned SDP\xspace}
\newcommand{\oldu}{\zeta}
\begin{document}

\title{The SDP value of random 2CSPs}
\author{Amulya Musipatla%
\thanks{Computer Science Department, Carnegie Mellon University.}
\and
Ryan O'Donnell\footnotemark[1]\;%
\thanks{Supported by NSF grant FET-1909310.
    This material is based upon work
    supported by the National Science Foundation under grant numbers
    listed above. Any opinions, findings and conclusions or
    recommendations expressed in this material are those of the author
    and do not necessarily reflect the views of the National Science
    Foundation (NSF).}
\and
Tselil Schramm%
\thanks{Department of Statistics, Stanford University.}
\and
Xinyu Wu\footnotemark[1]\;\footnotemark[2]
}

\date{\today}

\maketitle

\vspace{-.2in}

\begin{abstract}
    We consider a very wide class of models for sparse random Boolean 2CSPs; equivalently, degree-2 optimization problems over~$\{\pm 1\}^n$.
    For each model $\calM$, we identify the ``high-probability value''~$\SStar{\calM}$ of the natural SDP relaxation (equivalently, the quantum value).
    That is, for all $\eps > 0$ we show that the SDP optimum of a random $n$-variable instance is (when normalized by~$n$) in the range $(\SStar{\calM}-\eps, \SStar{\calM}+\eps)$ with high probability.
    Our class of models includes non-regular CSPs, and ones where the SDP relaxation value is strictly smaller than the spectral relaxation value.
\end{abstract}

\section{Introduction} \label{sec:intro}

A large number of important algorithmic tasks can be construed as constraint satisfaction problems (CSPs): finding an assignment to Boolean variables to optimize the number of satisfied constraints.
Almost every form of constraint optimization is $\mathsf{NP}$-complete; thus one is led to questions of efficiently finding near-optimal solutions, or understanding the complexity of average-case rather than worst-case instances.
Indeed, understanding the complexity of random sparse CSPs is of major importance not just in traditional algorithms theory, but also in, e.g., cryptography~\cite{JLS20}, statistical physics~\cite{MM09}, and learning theory~\cite{DSS14}.

Suppose we fix the model $\calM$ for a random sparse CSP on~$n$ variables (e.g., random $k$-SAT with a certain clause density).
Then it is widely believed that there should be a constant~$\CStar{\calM}$ such that the optimal value of a random instance is $\CStar{\calM} \pm o_{n\to\infty}(1)$ with high probability (whp).
(Here we define the optimal value to mean the maximum number of simultaneously satisfiable constraints, divided by the number of variables.)
Unfortunately, it is extremely difficult to prove this sort of result; indeed, it was considered a major breakthrough when Bayati, Gamarnik, and Tetali~\cite{BGT13} established it for one of the simplest possible cases: Max-Cut on random $d$-regular graphs (which we will denote by $\mathcal{MC}_d$).
Actually ``identifying'' the value~$\CStar{\calM}$ (beyond just proving its existence) is even more challenging.
It is generally possible to estimate $\CStar{\calM}$ using heuristic methods from statistical physics, but making these estimates rigorous is beyond the reach of current methods.
Taking again the example of Max-Cut on $d$-regular random graphs, it was only recently~\cite{DMS17} that the value $\CStar{\mathcal{MC}_d}$ was determined up to a factor of $1 \pm o_{d \to \infty}(1)$.
The value for any particular~$d$, e.g.\ $\CStar{\mathcal{MC}_3}$, has yet to be established.

Returning to algorithmic questions, we can ask about the \emph{computational feasibility} of optimally solving sparse random CSPs.
There are two complementary questions to ask: given a random instance from model~$\calM$ (with presumed optimal value~$\CStar{\calM} \pm o_{n\to \infty}(1)$), can one efficiently \emph{find} a solution achieving value \mbox{$\gtrapprox \CStar{\calM}$}, and can one efficiently \emph{certify} that every solution achieves value \mbox{$\lessapprox~\CStar{\calM}$}?
The former question is seemingly a bit more tractable; for example, a very recent breakthrough of Montanari~\cite{Mon21} gives an efficient algorithm for (whp) finding a cut in a random graph $\calG(n,p)$ graph of value at least $(1-\eps) \CStar{\calG(n,p)}$.
On the other hand, we do not know any algorithm for efficiently certifying (whp) that a random instance has value at most~$(1+\eps) \CStar{\calG(n,p)}$.
Indeed, it reasonable to conjecture that no such algorithm exists, leading to an example of a so-called ``information-computation gap''.

To bring evidence for this we can consider \emph{semidefinite programming} (SDP), which provides efficient algorithms for certifying an upper bound on the optimal value of a CSP~\cite{FL92}.
Indeed, it is known~\cite{Rag09} that, under the Unique Games Conjecture, the basic SDP relaxation provides essentially optimal certificates for CSPs in the \emph{worst case}.
In this paper we in particular consider Boolean 2CSPs --- more generally, optimizing a homogeneous degree-$2$ polynomial over the hypercube --- as this is the setting where semidefinite programming is most natural.
Again, for a fixed model $\calM$ of random sparse Boolean 2CSPs, one expects there should exist a constant~$\SStar{\calM}$ such that the optimal SDP-value of an instance from~$\calM$ is whp $\SStar{\calM} \pm o_{n\to\infty}(1)$.
Philosophically, since semidefinite programming is doable in polynomial time, one may be more optimistic about proving this and explicitly identifying~$\SStar{\calM}$.
Indeed, some results in this direction have recently been established.

\subsection{Prior work on identifying high-probability SDP values}
Let us consider the most basic case: $\mathcal{MC}_d$, Max-Cut on random $d$-regular graphs.
For ease of notation, we will consider the equivalent problem of maximizing $\tfrac{1}{n}x^\transp (-\bA) x$ over $x \in \{\pm 1\}^n$, where $\bA$ is the adjacency matrix of a random $n$-vertex $d$-regular graph.\footnote{Throughout this work, \textbf{boldface} is used to denote random variables.}
Although $\SStar{\mathcal{MC}_d}$, the high-probability SDP relaxation value, was pursued as early as 1987~\cite{Bop87} (see also~\cite{DH73}), it was not until 2015 that Montanari and Sen~\cite{MS16} established the precise result $\SStar{\mathcal{MC}_d} = 2 \sqrt{d-1}$.  That is, in a random $d$-regular graph, whp the basic SDP relaxation value~\cite{Bop87,RW95,DP93,PR95} for the size of the maximum cut is $(\frac{d}{4} + \sqrt{d-1} \pm o_{n\to\infty}(1)) n$.
Here the special number $2\sqrt{d-1}$ is the maximum eigenvalue of the $d$-regular infinite tree.

The proof of this result has two components: showing $\GW(-\bA) \geq 2 \sqrt{d-1} - \eps$ whp, and showing $\GWDual(-\bA) \leq 2 \sqrt{d-1} + \eps$ whp.
Here $\GW(A) = \max \{\la \rho, A \ra : \rho \pgeq 0,\;\rho_{ii} = \tfrac1n\;\forall i\}$ denotes the ``primal'' SDP value on matrix~$A$ (commonly associated with the Goemans--Williamson rounding algorithm~\cite{GW95}), and $\GWDual(A) = \min \{\lambda_{\max}(A + \diag(\oldu)) : \avg_i(\oldu_i) = 0\}$  denotes the (equal) ``dual'' SDP value on~$A$.
To show the latter bound, it is sufficient to observe that $\GWDual(-\bA) \leq \lambda_{\max}(-\bA)$, the ``eigenvalue bound'', and  $\lambda_{\max}(-\bA) \leq 2\sqrt{d-1} + o_n(1)$ whp by Friedman's Theorem~\cite{Fri08}.
As for lower-bounding $\GW(-\bA)$, Montanari and Sen used the ``Gaussian wave'' method~\cite{Elo09,CGHV15,HV15} to construct primal SDP solutions achieving at least $2 \sqrt{d-1} - \eps$ (whp).
The idea here is essentially to build the SDP solutions using an approximate eigenvector (of finite support) of the infinite $d$-regular tree achieving eigenvalue $2 \sqrt{d-1} - \eps$; the fact that SDP constraint ``$\rho_{ii} = \tfrac1n \;\forall i$'' can be satisfied relies heavily on the regularity of the graph.

\begin{remark}  \label{rem:theme}
    The Montanari--Sen result in passing establishes that the (high-probability) eigenvalue and SDP bounds coincide for random regular graphs.
    This is consistent with a known theme, that the two bounds tend to be the same (or nearly so) for graphs where ``every vertex looks similar'' (in particular, for regular graphs).
    This theme dates back to Delorme and Poljak~\cite{DP93}, who showed that $\GWDual(-A) = \lambda_{\max}(-A)$ whenever $A$ is the adjacency matrix of a vertex-transitive graph.
\end{remark}

Subsequently, the high-probability SDP value $\SStar{\calM}$ was established for a few other models of random regular 2CSPs.
Deshpande, Montanari, O'Donnell, Schramm, and Sen~\cite{DMOSS19} showed that for $\calM = \mathcal{N\mkern-5mu AE}3_c$ --- meaning random regular instances of NAE-3SAT (not-all-equals 3Sat) with each variable participating in~$c$ clauses --- we have $\SStar{\calM} = \frac{9}{8} - \frac38 \cdot \frac{\sqrt{c-1}- \sqrt{2}}{c}$.
We remark that NAE-3SAT is effectively a 2CSP, as the predicate $\mathrm{NAE}_3 : \{\pm 1\}^3 \to \{0,1\}$ may be expressed as $\frac34 - \frac14(xy + yz + zx)$, supported on the ``triangle'' formed by variables $x,y,z$.
The analysis in this paper is somewhat similar to that in~\cite{MS16}, but with the infinite graph $\mathfrak{X} = K_3 \star K_3 \star \cdots \star K_3$ ($c$~times) replacing the $d$-regular infinite tree.
This $\mathfrak{X}$ is the $2c$-regular infinite ``tree of triangles'' depicted (partly, in the case $c=3$) in \Cref{fig:c3c3c3}.
More generally, \cite{DMOSS19}~established the high-probability SDP value for large random (edge-signed) graphs that locally resemble $K_r \star K_r \star \cdots \star K_r$, the $(r-1)c$-regular infinite ``tree of cliques~$K_r$''.
(The $r = 2$ case essentially generalizes~\cite{MS16}.)
As in \cite{MS16}, $\SStar{\calM}$ coincides with the (high-probability) eigenvalue bound.
The upper bound on $\SStar{\calM}$ is shown by using Bordenave's proof~\cite{Bor20} of Friedman's Theorem for random $(c,r)$-biregular graphs.
The lower bound on $\SStar{\calM}$ is shown using the Gaussian wave technique, relying on the distance-regularity of the graphs $K_r \star K_r \star \cdots \star K_r$ (indeed, it is known that every infinite distance-regular graph is of this form).

\myfig{.255}{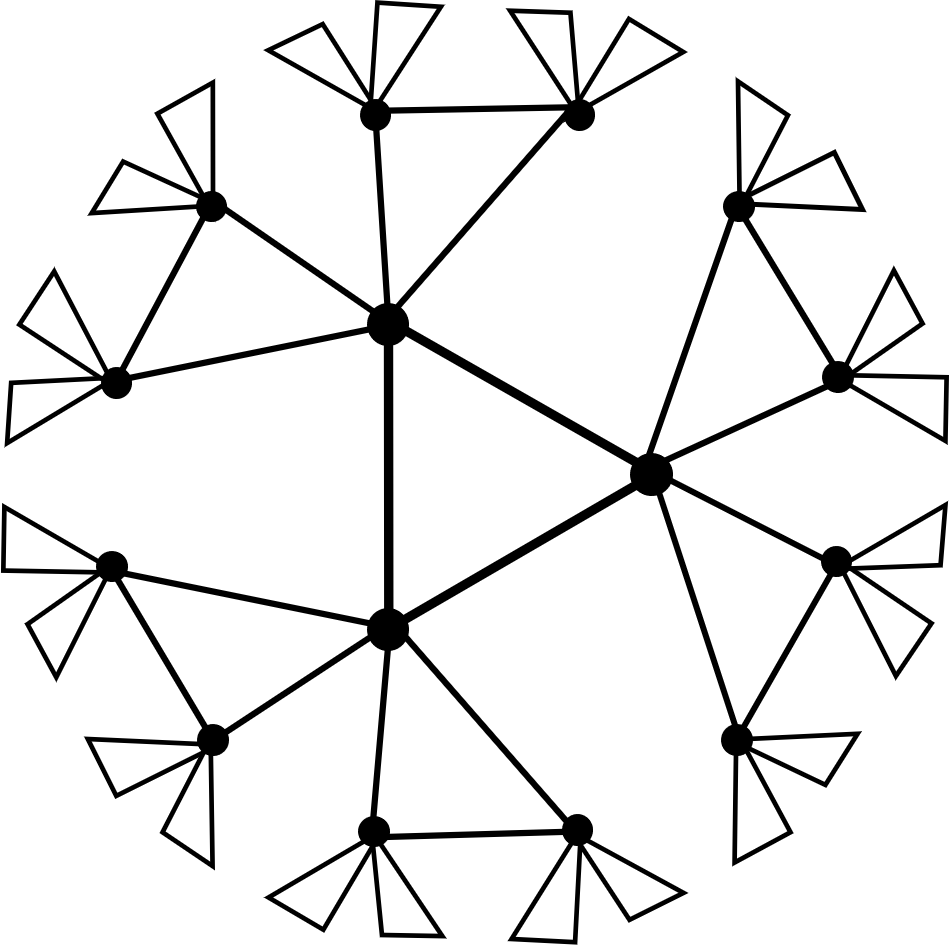}{The $6$-regular infinite graph $K_3 \star K_3 \star K_3$, modeling random $3$-regular NAE3-SAT.}{fig:c3c3c3}

Mohanty, O'Donnell, and Paredes~\cite{MOP20} generalized the preceding two results to the case of ``two-eigenvalue'' 2CSPs.
Roughly speaking, these are 2CSPs formed by placing copies of a small weighted ``constraint graph''~$\calH$ --- required to have just two distinct eigenvalues --- in a random regular fashion onto~$n$ vertices/variables.
(This is indeed a generalization~\cite{DMOSS19}, as cliques have just two distinct eigenvalues.)
As two-eigenvalue examples, \cite{MOP20}~considered CSPs with the ``CHSH constraint'' --- and its generalizations, the ``Forrelation${}_k$'' constraints --- which are important in quantum information theory~\cite{CHSH69,AA18}.
Here the SDP value of an instance is particularly relevant as it is precisely the optimal ``quantum entangled value'' of the 2CSP~\cite{CHTW04}.
Once again, it is shown in~\cite{MOP20} that the high-probability SDP and eigenvalue bounds coincide for these types of CSPs.
The two-eigenvalue condition is used at a technical level in both the variant of Bordenave's theorem proven for the eigenvalue upper bound, and in the Gaussian wave construction in the SDP lower bound.

Most recently, O'Donnell and Wu~\cite{OW20} used the results of Bordenave and Collins~\cite{BC19} (a substantial generalization of~\cite{Bor20}) to establish the high-probability \emph{eigenvalue relaxation bound} for  very wide range of random 2CSPs, encompassing all those previously mentioned: namely, any quadratic optimization problem defined by random ``matrix polynomial lifts'' over literals.

\subsection{Our work}

In this work, we establish the high-probability SDP value $\SStar{\calM}$ for random instances of any 2CSP model~$\calM$ arising from lifting matrix polynomials (as in~\cite{OW20}).
This generalizes all previously described work on SDP values, and covers many more cases, including random lifts of any base 2CSP and random graphs modeled on any free/additive/amalgamated product.
Such graphs have seen numerous applications within theoretical computer science, for example the zig-zag product in derandomization (e.g.~\cite{RVW02}) and lifts of 2CSPs in the study of the stochastic block model (e.g~\cite{BBKMW20}).
See \Cref{sec:prelims} for more details and definitions, and see~\cite{OW20} for a thorough description of the kinds of random graph/2CSP models that can arise from matrix polynomials.

Very briefly, a matrix polynomial~$p$ is a small, explicit ``recipe'' for producing random $n$-vertex edge-weighted graphs, each of which ``locally resembles'' an associated \emph{infinite} graph~$\mathfrak{X}_p$.
For example, $p_3(Y_1,Y_2,Y_3) \coloneqq Y_1 + Y_2 + Y_3$ is a recipe for random (edge-signed) $3$-regular $n$-vertex graphs, and here $\mathfrak{X}_{p_3}$ is the infinite $3$-regular tree.
As another example, if $p_{333}(Z_{1,1}, \dots, Z_{3,3})$ denotes the following matrix polynomial ---
\begin{equation} \label{eqn:p333}
    \begin{pmatrix}
                  0
                & Z_{1,1} Z_{1,2}^* + Z_{2,1} Z_{2,2}^* + Z_{3,1} Z_{3,2}^*
                & Z_{1,1} Z_{1,3}^* + Z_{2,1} Z_{2,3}^* + Z_{3,1} Z_{3,3}^*  \\
                   Z_{1,2} Z_{1,1}^* + Z_{2,2} Z_{2,1}^* + Z_{3,2} Z_{3,1}^*
                & 0
                & Z_{1,2} Z_{1,3}^* + Z_{2,2} Z_{2,3}^* + Z_{3,2} Z_{3,3}^* \\
                    Z_{1,3}Z_{1,1}^* + Z_{2,3}Z_{2,1}^*  + Z_{3,3}Z_{3,1}^*
                &  Z_{1,3}Z_{1,2}^* +  Z_{2,3}Z_{2,2}^* + Z_{3,3}Z_{3,2}^*
                & 0
          \end{pmatrix}
\end{equation}
--- then $p_{333}$ is a recipe for random (edge-signed) $6$-regular $n$-vertex graphs where every vertex participates in $3$~triangles.
In this case, $\mathfrak{X}_{p_{333}}$ is the infinite graph (partly) depicted in \Cref{fig:c3c3c3}.
The Bordenave--Collins theorem~\cite{BC19} shows that if $\bA$ is the adjacency matrix of a random \emph{unsigned} $n$-vertex graph produced from a matrix polynomial~$p$, then whp the ``nontrivial'' spectrum of~$\bA$ will be within~$\eps$ (in Hausdorff distance) of the spectrum of~$\mathfrak{X}_p$.
In the course of derandomizing this theorem, O'Donnell and Wu~\cite{OW20} established that for random \emph{edge-signed} graphs, the modifier ``nontrivial'' should be dropped.
As a consequence, in the signed case one gets $\lambda_{\max}(\bA) \approx \lambda_{\max}(\mathfrak{X}_p)$ up to an additive~$\eps$, whp; i.e., the high-probability eigenvalue bound for CSPs derived from~$p$ is precisely~$\lambda_{\max}(\mathfrak{X}_p)$.
We remark that for simple enough~$p$ there are formulas for $\lambda_{\max}(\mathfrak{X}_p)$; regarding our two example above, it is $2\sqrt{2}$ for $p = p_3$, and it is~$5$ for $p = p_{333}$.
In particular, if $p$ is a linear matrix polynomial, $\lambda_{\max}(\mathfrak{X}_p)$ may be determined numerically with the assistance of a formula of Lehner~\cite{Leh99} (see also~\cite{GK21} for the case of standard random lifts of a fixed base graph).\\

In this paper we investigate the high-probability \emph{SDP} value --- denote it $s^*_p$ --- of a large random 2CSP (Boolean quadratic optimization problem) produced by a matrix polynomial~$p$.
Critically, our level of generality lets us consider \emph{non-regular} random graph models, in contrast to all previous work.
Because of this, see we cases in contrast to \Cref{rem:theme}, where (whp) the SDP value is strictly smaller than the eigenvalue relaxation bound.
As a simple example, for random edge-signed $(2,3)$-biregular graphs, the high-probability eigenvalue bound is $\sqrt{2-1} + \sqrt{3-1} = 1+ \sqrt{2} \approx 2.414$, but our work establishes that the high-probability SDP value is $\sqrt{\frac{13}{4} + 2\sqrt{2}} - \frac{1}{10} \approx 2.365$.\\

An essential part of our work is establishing the appropriate notion of the ``SDP value'' of an infinite graph~$\mathfrak{X}_p$, with adjacency operator~$A_\infty$.
While the eigenvalue bound $\lambda_{\max}(A_\infty)$ makes sense for the infinite-dimensional operator~$A_\infty$, the SDP relaxation does not.
The definition $\GW(A_\infty) = \max \{\la \rho, A_\infty \ra : \rho \pgeq 0,\;\rho_{ii} =\tfrac1n\;\forall i\}$ does not make sense, since ``$n$'' is~$\infty$.
Alternatively, if one tries the normalization $\rho_{ii} = 1$, any such~$\rho$ will have infinite trace, and hence $\la \rho, A_\infty \ra$ may be~$\infty$.
Indeed, since the only control we have on~$A_\infty$'s ``size'' will be an operator norm (``$\infty$-norm'') bound, the expression $\la \rho, A_\infty \ra$ is only guaranteed to make sense if~$\rho$ is restricted to be trace-class (i.e., have finite ``$1$-norm'').

On the other hand, we know that the eigenvalue bound $\lambda_{\max}(A_\infty)$ is too weak, intuitively because it does not properly ``rebalance'' graphs~$\mathfrak{X}_p$ that are not regular/vertex-transitive.
The key to obtaining the correct bound is introducing a new notion, intermediate between the eigenvalue and SDP bounds, that is appropriate for graphs $\mathfrak{X}_p$ arising from matrix polynomial recipes.
Although these graphs may be irregular, their definition also allows them to be viewed as \emph{vertex-transitive} infinite graphs with $r \times r$ \emph{matrix} edge-weights.
In light of their vertex-transitivity, \Cref{rem:theme} suggests that a ``maximum eigenvalue''-type quantity --- suitably defined for matrix-edge-weighted graphs --- might serve as the sharp replacement for SDP value.
We introduce such a quantity, calling it the \emph{\geneig} bound.
Let $G$ be an $n$-vertex graph with $r \times r$ matrices as edge-weights, and let $A$ be its adjacency matrix, thought of as a Hermitian $n \times n$ matrix whose entries are $r \times r$ matrices.
We will define
\begin{equation} \label{eqn:geneig}
    \GenEig(A) = \sup \{\la \rho, A \ra : \rho \pgeq 0,\;\tr(\rho)_{ii} = \tfrac1r\},
\end{equation}
where here $\tr(\rho)$ refers to the $r \times r$ matrix obtained by summing the entries on $A$'s main diagonal (themselves $r \times r$ matrices), and $\tr(\rho)_{ii}$ denotes the scalar in the $(i,i)$-position of~$\tr(\rho)$.
This \geneig bound may indeed be regarded as intermediate between the maximum eigenvalue and the SDP value.
On one hand, given an scalar-edge-weighted $n$-vertex graph with adjacency matrix~$A$, we may take~$r = 1$ and then it is easily seen that $\GenEig(A)$ coincides with $\lambda_{\max}(A)$.
On the other hand, if we regard $A$ as a $1 \times 1$ matrix and take $r = n$ (so that we a have single vertex with a self-loop weighted by all of~$A$), then $\GenEig(A) = \SDP(A)$.

As we will see in \Cref{sec:sdp-duality}, $\GenEig(A)$ can be suitably defined even for bounded-degree infinite graphs with $r \times r$ edge-weights.
Furthermore, it has the following SDP dual:
\[
    \GenEigDual(A) = \inf \{\lambda_{\max}(A + \bbone_{n \times n} \otimes \diag(\oldu)) : \avg(\oldu_1, \dots, \oldu_r) = 0\}.
\]
In the technical \Cref{sec:dual2}, we show that there is no SDP duality gap between $\GenEig(A)$ and $\GenEigDual(A)$, even in the case of infinite graphs.
It is precisely the common value of $\GenEig(\mathfrak{X}_p)$ and $\GenEigDual(\mathfrak{X}_p)$ that is the high-probability SDP value of large random 2CSPs produced from~$p$; our main theorem is the following:
\begin{theorem} \label{thm:ourmain}
    Let $p$ be a matrix polynomial with $r \times r$ coefficients.
    Let $A_\infty$ be the adjacency operator (with $r \times r$ entries) of the associated infinite lift~$\mathfrak{X}_p$, and write $s^*_p = \GenEig(A_\infty) = \GenEigDual(A_\infty)$.
    Then for any $\eps,\beta > 0$ and sufficiently large~$n$, if $\bA_n$ is the adjacency matrix of a random edge-signed $n$-lift of~$p$, it holds that $s^*_p - \eps \leq \SDP(\bA_n) \leq s^*_p + \eps$ except with probability at most~$\beta$.
\end{theorem}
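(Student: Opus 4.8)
The statement splits into the lower bound $\SDP(\bA_n) \ge s^*_p - \eps$ and the upper bound $\SDP(\bA_n) \le s^*_p + \eps$, each to hold with probability $\ge 1-\beta$. Throughout I would treat as black boxes: (i) the infinite-graph SDP duality $s^*_p = \GenEig(A_\infty) = \GenEigDual(A_\infty)$ and the finite-truncation/trace-class facts proved in \Cref{sec:dual2} and \Cref{sec:sdp-duality}; and (ii) the O'Donnell--Wu spectral theorem~\cite{OW20}: for any matrix polynomial $q$, the adjacency matrix of a random edge-signed $n$-lift of $q$ has, whp, largest eigenvalue at most $\lambda_{\max}(\mathfrak X_q)+\eps$.

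\emph{The upper bound} is the easy direction. By finite SDP strong duality, $\SDP(\bA_n)=\GWDual(\bA_n)\le \lambda_{\max}(\bA_n + \bbone_{n\times n}\otimes\diag(\oldu))$ for every $\oldu\in\mathbb R^r$ with $\avg(\oldu)=0$, since such a $\oldu$, read as a vector that is constant along each fiber, is dual-feasible. Pick a near-minimizer $\oldu^*$ of the defining infimum for $\GenEigDual(A_\infty)$ (which ranges over the finite-dimensional set $\{\avg(\oldu)=0\}$), so that $\lambda_{\max}(A_\infty+\bbone\otimes\diag(\oldu^*))\le s^*_p+\eps$. The perturbed matrix $\bA_n+\bbone_{n\times n}\otimes\diag(\oldu^*)$ is exactly the adjacency matrix of a random edge-signed $n$-lift of the matrix polynomial $p+\diag(\oldu^*)$ --- adding a constant Hermitian coefficient merely plants a fixed self-loop weight on each fiber vertex --- whose infinite lift has adjacency operator $A_\infty+\bbone\otimes\diag(\oldu^*)$. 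Hence by~\cite{OW20}, whp $\lambda_{\max}(\bA_n+\bbone_{n\times n}\otimes\diag(\oldu^*))\le \lambda_{\max}(A_\infty+\bbone\otimes\diag(\oldu^*))+\eps\le s^*_p+2\eps$, and rescaling $\eps$ finishes.

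\emph{The lower bound} I would prove by constructing an explicit feasible primal solution via a ``planted wave'' argument, now driven by $\GenEig$ rather than by a single approximate eigenvector. First, take a near-optimal solution to $\GenEig(A_\infty)$ and, using that the optimizing $\rho$ in~\eqref{eqn:geneig} may be taken trace-class with $\tr(\rho)=1$ together with the machinery of \Cref{sec:sdp-duality}, truncate it to a finite ball: obtain a finitely supported PSD operator $\tilde\rho$ on $\ell^2(\mathfrak X_p)$, supported within $\Ball_R(o)$ for some radius $R=R(\eps)$ and root $o$, with $\tr(\tilde\rho)_{ii}\le\tfrac1r$ for all $i$, $\tr(\tilde\rho)\ge 1-\eps$, and $\la\tilde\rho,A_\infty\ra\ge s^*_p-\eps$ (here the projected operator $P_R\rho P_R$ converges to $\rho$ in trace norm, and $\la P_R\rho P_R,A_\infty\ra\to\la\rho,A_\infty\ra$ since $\|A_\infty\|_{\mathrm{op}}<\infty$ for bounded degree). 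Write $\tilde\rho=\sum_k c_k\,\psi_k\psi_k^*$ with $c_k\ge 0$ and each $\psi_k$ supported in $\Ball_R(o)$. Next, in the random lift: whp all but an $\eps$-fraction of fiber vertices $w$ of the same fiber-type as $o$ are ``good'', meaning the radius-$2R$ ball around $w$ in $\bA_n$ is, after a local gauge transformation of the edge-signs, isomorphic as a matrix-edge-weighted graph to $\Ball_{2R}(o)$ in $\mathfrak X_p$ --- this local resemblance, including triviality of the gauge class (every short cycle of $\mathfrak X_p$ lifts together with its algebraic sign relation), is the standard property of matrix-polynomial lifts. For each good $w$ fix the gauge-corrected isomorphism $\phi_w:\Ball_R(o)\to\Ball_R(w)$ and set
\[
  \rho \;=\; \frac1n \sum_{w\ \mathrm{good}}\ \sum_k c_k\,(\phi_w\psi_k)(\phi_w\psi_k)^*,
\]
which is manifestly PSD. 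The normalization now does its job: for a fixed vertex $(u,a)$ of $\bA_n$, the good $w$ contributing to $\rho_{(u,a),(u,a)}$ are in bijection (via $w\mapsto\phi_w^{-1}(u,a)$) with the fiber-$a$ vertices of $\Ball_R(o)$, so by vertex-transitivity of $\mathfrak X_p$ one gets $\rho_{(u,a),(u,a)}=\tfrac1n\sum_{x:\,\mathrm{fib}(x)=a}\la e_x,\tilde\rho\,e_x\ra=\tfrac1n\tr(\tilde\rho)_{aa}\le\tfrac1{nr}$ --- which is precisely why the constraint ``$\tr(\rho)_{ii}=\tfrac1r$'' in~\eqref{eqn:geneig} is the right replacement for ``$\rho_{ii}=\tfrac1n$''. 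Topping up, $\rho':=\rho+\diag(\delta)$ with $\delta_{(u,a)}:=\tfrac1{nr}-\rho_{(u,a),(u,a)}\ge 0$ is PSD, has all diagonal entries $\tfrac1{nr}$ (hence is SDP-feasible), and changes the objective by only $O(\eps)$ since $\sum_{(u,a)}\delta_{(u,a)}=1-\tr(\rho)=O(\eps)$ and $\bA_n$ has bounded diagonal. Finally, because each $\psi_k$ is supported in $\Ball_R(o)$ while $\phi_w$ is an isomorphism of $\Ball_{2R}(o)$, the quadratic form localizes, $\la\phi_w\psi_k,\bA_n\,\phi_w\psi_k\ra=\la\psi_k,A_\infty\,\psi_k\ra$ (the gauge cancels the edge-signs), so $\la\rho,\bA_n\ra=\tfrac{\#\mathrm{good}}{n}\,\la\tilde\rho,A_\infty\ra\ge(1-\eps)(s^*_p-\eps)$, and since $\tr(\tilde\rho)\le 1$ gives $|\la\tilde\rho,A_\infty\ra|\le\|A_\infty\|_{\mathrm{op}}$ bounded, this is $s^*_p-O(\eps)$. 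Thus $\SDP(\bA_n)\ge\la\rho',\bA_n\ra\ge s^*_p-O(\eps)$ whp, and rescaling $\eps$ completes the proof.

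\emph{Main obstacle.} The hard part is the lower-bound construction: in every prior treatment the underlying graph was vertex-transitive in the ordinary sense, so any planted-wave $\rho$ automatically satisfied $\rho_{ii}=\tfrac1n$, whereas here the diagonal is pinned down only after summing over a fiber, and reconciling this with exact SDP feasibility is exactly what the $\GenEig$ formulation (and its zero duality gap with $\GenEigDual$, from \Cref{sec:dual2}) is built to do. The subsidiary technical points --- truncating the trace-class operator of~\eqref{eqn:geneig} without losing much value, trivializing the random edge-signs over balls isomorphic to those of $\mathfrak X_p$ via gauge transformations, and absorbing the $o(1)$ fraction of atypical neighborhoods and the bounded self-loop contribution into the error --- follow the templates of~\cite{MS16,DMOSS19,MOP20}, now carried out in the irregular, matrix-weighted setting.
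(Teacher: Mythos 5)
Your proposal is correct and follows essentially the same route as the paper: the upper bound via a near-optimal dual vector $\hat{\oldu}$, the perturbed polynomial $p+\diag(\hat{\oldu})\bbi$, and the Bordenave--Collins/O'Donnell--Wu spectral theorem; the lower bound via a finite-support near-optimizer of the infinite \geneig SDP, pasted into the cycle-free neighborhoods of the random lift and averaged so that the fiber-trace constraint $\tr(\rho)_{ii}=\tfrac1r$ turns into the per-vertex diagonal constraint $\tfrac1{nr}$. The only divergence is in the final repair step and is cosmetic: you average over good centers only and top up the diagonal with a nonnegative diagonal matrix, whereas the paper averages over all centers (so good diagonals come out exactly $\tfrac1{nr}$), then restricts to the good block and controls the trace-norm loss with Winter's Gentle Measurement Lemma.
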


\noindent Note that $\GenEig(A_\infty)$ is a fixed value only dependent on the polynomial~$p$, a finitary object.\\

The upper bound $\SDP(\bA_n) \leq \GenEigDual(A_\infty) + \eps$ in this theorem can be derived from the results of~\cite{BC19,OW20}.
Our main work is to prove the lower bound $\SDP(\bA_n) \geq \GenEig(A_\infty) - \eps$.
For this, our approach is inspired by the Gaussian Wave construction of \cite{MS16,DMOSS19} for $d$-regular graphs (in the random lifts model), which can be viewed as constructing a feasible $\SDP(\bA_n)$ solution of value $\lambda_{\max}(A_\infty) - \eps$ using a truncated eigenfunction of $A_\infty$.
Since local neighborhoods in $A_\infty$ look like local neighborhoods in $\bA_n$ with high probability, the eigenfunction can be ``pasted'' almost everywhere into the graph $\bG_n$, which gives an SDP solution of value near $\lambda_{\max}(A_\infty)$.

This approach runs into clear obstacles in our setting.
Indeed, the raw eigenfunctions are of no use to us, as {\em the SDP value may be smaller than the spectral relaxation value}.
Instead, we first show that there is a $\rho_0$ with only finitely many nonzero entries that achieves the $\sup$ in \Cref{eqn:geneig} up to~$\eps$.
This is effectively a finite $r \times r$ matrix-edge-weighted graph.
We then show that this~$\rho_0$ can (just as in the regular case) whp be ``pasted'' almost everywhere into the graph $\bG_n$ defined by~$\bA_n$, which gives an SDP solution of value close to $\GenEig(A_\infty)$.
The fact that $\mathfrak{X}_p$ and $\bG_n$ are regarded as regular tree-like graphs with matrix edge-weights (rather than as irregular graphs with scalar edges-weights) is crucially used to show that the ``pasted solution'' satisfies the finite SDP's constraints ``$\rho_{ii} = \frac1n\; \forall i$''.

\section{Preliminaries} \label{sec:prelims}

To preface the following definitions and concepts, we remark that our ``real'' interest is in graphs (or 2CSPs) with real scalar weights.
The introduction of matrix edge-weights facilitates two things: helping us define a wide variety of interesting scalar-weighted graphs via matrix polynomial lifts; and, facilitating the definition of $\GenEig(\cdot)$, which we use to bound the SDP relaxation value of the associated 2CSPs.
Our use of potentially complex matrices is also not really essential; we kept them because  prior work that we utilize (\cite{BC19}, the tools in \Cref{sec:sdp-duality}) is stated in terms of complex matrices.
However the reader will not lose anything by assuming that all Hermitian matrices are in fact symmetric real matrices.

\subsection{Matrix-weighted graphs}

In the following definitions, we'll restrict attention to graphs with at-most-countable vertex sets~$V$ and bounded degree.
We also often use bra-ket notation, with $(\ket{v})_{v \in V}$ denoting the standard orthonormal basis for the complex vector space~$\ell_2(V)$.

\begin{definition}[Matrix-weighted graph]
    Fix any $r \in \N^+$.
    A \emph{matrix-weighted graph} will refer to a directed simple graph $G = (V,E)$ with self-loops allowed, in which each directed edge~$e$ has an associated \emph{weight} $a_e \in \C^{r \times r}$.
    If $(v,w) \in E \implies (w,v) \in E$ and $a_{(w,v)} = a_{(v,w)}^*$, we say that $G$ is an \emph{undirected} matrix-weighted graph.
    The \emph{adjacency matrix} of $G$ is the operator~$A$, acting on~$\ell_2(V) \otimes \C^r$, given by
    \[
        \sum_{(v,w) \in E} \ketbra{w}{v} \otimes a_{(v,w)}.
    \]
    It can be helpful to think to think of $A$ in matrix form, as a $|V| \times |V|$ matrix whose entries are themselves $r \times r$ edge-weight matrices.
    Note that if $G$ is undirected if and only if $A$ is self-adjoint, $A = A^*$.
\end{definition}

\begin{definition}[Extension of a matrix-weighted adjacency matrix]    \label{def:extension}
	Given a $|V| \times |V|$ matrix $A$ with $r \times r$ entries, we may also view it as a $|V|r \times |V|r$ matrix with scalar entries.
    When we wish to explicitly call attention to the distinction, we will call the latter matrix the \emph{extension} of $A$, and denote it by~$\widetilde{A}$.
\end{definition}

\subsection{Matrix polynomials}
\begin{definition}[Matrix polynomial]
	Let $Y_1, \ldots, Y_d$ be formal indeterminates that are their own inverses, and let $Z_1, \ldots, Z_e$ be formal indeterminates with formal inverses $Z_1^*, \ldots, Z_e^*$.
	For a fixed $r$, we define a \emph{matrix polynomial} to be a formal noncommutative polynomial~$p$ over the indeterminates $Y_1, \ldots , Z^*_e$, with coefficients in $\C^{r \times r}$.
    In particular, we may write
    \[
        p = \sum_{w} a_w w,
    \]
    where the sum is over words~$w$ on the alphabet of indeterminates, each $a_w$ is in $\C^{r \times r}$, and only finitely many~$a_w$ are nonzero.
    Here we call a word \emph{reduced} if it has no adjacent $Y_i Y_i$ or $Z_i Z_i^*$ pairs.
    We will denote the empty word by~$\bbi$.
\end{definition}

As we will shortly describe, we will always be considering substituting unitary operators for the $Z_i$'s, and unitary involution operators for the $Y_i$'s.
Thus we can think of $Z_i^*$ as both the inverse \emph{and} the ``adjoint'' of indeterminate~$Z_i$, and similarly we think of $Y_i^* = Y_i$.

\begin{definition}[Adjoint of a polynomial]
	Given a matrix polynomial $p = \sum_w a_w w$ as above, we define its \emph{adjoint} to be
\[
    p^* = \sum_w a_w^* w^*,
\]
    where $a^*$ is the usual adjoint of $a \in \C^{r \times r}$, and $w^*$ is the adjointed reverse of~$w$.
    That is, if $w = w_1 \cdots w_k$ then $w^* = w_k^* \cdots w_1^*$,  where $\bbi^* = \bbi$,  $Y_i^* = Y_i$, and $Z_i^{**} = Z_i$.
    We say $p$ is \emph{self-adjoint} if $p^* = p$ formally.
\end{definition}

Note that in any self-adjoint polynomial, some terms will be self-adjoint, and others will come in self-adjoint pairs.
In this work, we will only be considering self-adjoint polynomials.

\subsection{Lifts of matrix polynomials} \label{sec:lifts}

\begin{definition}[$n$-lift]    \label{def:n-lift}
    Given a matrix polynomial over the indeterminates $Y_1, \ldots, Z^*_e$, we define an \emph{$n$-lift} to be a sequence $\calL = (M_1, \dots, M_d, P_1, \dots, P_e)$ of $n \times n$ matrices, where each~$P_i$ is a signed permutation matrix and each $M_i$ is a signed matching matrix.\footnote{A signed matching matrix is the adjacency matrix of a perfect matching with $\pm 1$ edge-signs. If $d > 0$ then we must restrict to even~$n$.}
    A \emph{random $n$-lift} $\bcalL = (\bM_1, \dots, \bM_d, \bP_1, \dots, \bP_e)$ is one where the matrices are chosen independently and uniformly at random.
\end{definition}
\begin{definition}[Evaluation/substitution of lifts.] \label{def:eval}
    Given an $n$-lift~$\calL$ and a word~$w$, we define $\calL^w$ to be the $n \times n$ operator obtained by substituting appropriately into~$w$: namely, $Y_i = M_i$, $Z_i = P_i$, and $Z_i^* = P_i^*$ for each~$i$ (and substituting the empty word with the $n \times n$ identity operator).
    Given also a matrix polynomial $p = \sum_w a_w w$, we define the \emph{evaluation of $p$ at~$\calL$} to be the following operator on $\C^{n} \otimes \C^r$:\footnote{Note that coefficients $a_w$ are written on the left in~$p$, as is conventional, but we take the tensor/Kronecker product on the right so that the matrix form of $p(\calL)$ may be more naturally regarded as an $n \times n$ matrix with $r \times r$ entries.}
    \[
        p(\calL) = \sum_w \calL^w \otimes a_w.
    \]
\end{definition}
\begin{remark} \label{rem:self-adj}
        Note that each $P_i$ is unitary  and each $M_i$ a unitary involution (as promised), so $p^*(\calL) = p(\calL)^*$.
    Thus $p(\calL)$ is a self-adjoint operator whenever $p^*$ is a self-adjoint polynomial.
    In this case we also have that $p(\calL)$ may be viewed as the adjacency matrix of an undirected graph on vertex set~$[n]$ with $r \times r$ edge-weights.
\end{remark}

Note that the evaluation $p(\calL)$ of a matrix polynomial may be viewed as the adjacency matrix of a undirected graph on~$[n]$ with $r \times r$ edge-weights; or, its extension may be viewed as the adjacency matrix of an undirected graph on $[n] \times [r]$ with scalar edge-weights.
In this way, each fixed matrix polynomial~$p$, when applied to a random lift, gives rise to a random (undirected, scalar-weighted) graph model.
\begin{examples}
	A simple example is the matrix polynomial
    \[
        p(Y_1,Y_2,Y_3) = Y_1 + Y_2 + Y_3.
    \]
    Here $r = 1$ and each coefficient is just the scalar~$1$.
    This~$p$ gives rise to a model of random edge-signed $3$-regular graphs on~$[n]$.
\end{examples}

By moving to actual matrix coefficients with $r > 1$, one can get the random (signed) graph model given by randomly $n$-lifting any base $r$-vertex graph~$H$.
\begin{examples}
As a simple example,
\[
    p(Z_1, Z_2, Z_3) = \begin{pmatrix} 0 & 1 \\ 0 & 0 \end{pmatrix} Z_1 + \begin{pmatrix} 0 & 0 \\ 1 & 0 \end{pmatrix} Z_1^* + \begin{pmatrix} 0 & 1 \\ 0 & 0 \end{pmatrix} Z_2 + \begin{pmatrix} 0 & 0 \\ 1 & 0 \end{pmatrix} Z_2^* +\begin{pmatrix} 0 & 1 \\ 0 & 0 \end{pmatrix} Z_3 + \begin{pmatrix} 0 & 0 \\ 1 & 0 \end{pmatrix} Z_3^*
\]
is the recipe for random $3$-regular (edge-signed) $(n + n)$-vertex bipartite graphs.
The reader may like to view this as a $2 \times 2$ matrix of polynomials,
\[
    p(Z_1, Z_2, Z_3) = \begin{pmatrix} 0 & Z_1 + Z_2 + Z_3 \\ Z_1^* + Z_2^* + Z_3^* & 0 \end{pmatrix},
\]
but recall that we actually Kronecker-product the coefficient matrices ``on the other side''.
So rather than as a $2 \times 2$ block-matrix with $n \times n$ blocks, we think of the resulting adjacency matrix as an $n \times n$ block-matrix with $2 \times 2$ blocks; equivalently, an $n$-vertex graph with $2 \times 2$ matrix edge-weights.
\end{examples}
\begin{examples}
    The matrix polynomial $p_{333}$ mentioned in \eqref{eqn:p333} gives an example of a \emph{nonlinear} polynomial with matrix coefficients.
    Again, we wrote it there as a $3 \times 3$ matrix of polynomials for compactness, but for analysis purposes we will view it as a degree-$2$ polynomial with $3 \times 3$ coefficients.
\end{examples}

\begin{definition}[$\infty$-lift]
    Formally, we extend \Cref{def:n-lift} to the case of $n = \infty$ as follows.
    Let $V_\infty$ denote the free product of groups $\Z_2^{\star d} \star \Z^{\star e}$, with its components generated by $g_1, \dots, g_d$, $h_1, h_1^{-1}, \dots, h_e, h_e^{-1}$.
    Thus the elements of~$V_\infty$ are in one-to-one correspondence with the reduced words over indeterminates $Y_1, \dots, Z_{e}^*$.
    The generators $g_1, \dots, g_d, h_1, \dots, h_e$ act as permutations on~$V_\infty$ by left-multiplication, with the first~$d$ in fact being matchings.
    We write $\sigma_1, \dots, \sigma_{d+e}$ for these permutations, and we also identify them with their associated permutation operators on $\ell_2(V_\infty)$.
    Finally, we write $\lift_\infty = (\sigma_0, \dots, \sigma_{d+2e})$ for \emph{``the'' $\infty$-lift} associated to~$p$.
    (Note that this lift is ``unsigned''.)
\end{definition}

\begin{definition}[Evaluation at the $\infty$-lift, and $\mathfrak{X}_p$.]
    The evaluation of a matrix polynomial~$p$ at the infinity lift $\calL_\infty$ is now defined just as in \Cref{def:eval}; the resulting operator $p(\calL_\infty)$ operates on $\ell_2(V_\infty) \otimes \C^r$.
    We may think of the result as a matrix-weighted graph on vertex set~$V_\infty$, and we will sometimes denote this graph by~$\mathfrak{X}_p$.
    When $p$ is understood, we often write $A_\infty = p(\calL_\infty)$ for the adjacency operator of~$\mathfrak{X}_p$, which can be thought of as an infinite matrix with rows/columns indexed by~$V_\infty$ and entries from~$\C^{r \times r}$, or as its ``extension'' $\wt{A}_\infty$, an infinite matrix with rows/columns indexed by $V_\infty \times [r]$ and scalar entries.
\end{definition}
\begin{examples}
	For the polynomial $p = Y_1 + \cdots + Y_d$, the corresponding graph $\mathfrak{X}_p$ is the infinite (unweighted) $d$-regular tree.
\end{examples}

We may now state a theorem which is essentially the main result (``Theorem~2'') of~\cite{BC19}.
The small difference is that our notion of random $n$-lifts, which includes $\pm 1$ \emph{signs} on the matchings/permutations, lets us eliminate mention of ``trivial'' eigenvalues (see~\cite[Thms.~1.9,~10.10]{OW20}).
\begin{theorem} \label{thm:bc-with-edge-signing}
        Let $p$ be a self-adjoint matrix polynomial with coefficients from $\C^{r \times r}$ on indeterminates $Y_1, \ldots, Z_e^*$.
        Then for all $\eps, \beta > 0$ and sufficiently large~$n$, the following holds:

        Let $\bA_n = p(\bcalL_n)$, where $\bcalL_n$ is a random $n$-lift, and let $A_\infty = p(\calL_\infty)$.
        Then except with probability at most~$\beta$, the spectra $\spec(\bA_n)$ and $\spec(A_\infty)$ are at Hausdorff distance at most~$\eps$
\end{theorem}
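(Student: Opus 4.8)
The plan is to deduce this from Theorem~2 of~\cite{BC19} together with the edge-signing refinement of~\cite{OW20}, splitting the proof into the two inclusions that constitute a Hausdorff-distance bound. Writing $\bA_n = p(\bcalL_n)$ and $A_\infty = p(\calL_\infty)$ --- both bounded self-adjoint operators, by \Cref{rem:self-adj} and because $p$ has finitely many nonzero coefficients while the lifts have bounded degree --- it suffices to show that, except with probability~$\beta$, every point of $\spec(\bA_n)$ lies within~$\eps$ of $\spec(A_\infty)$ (the ``no outliers'' inclusion) and every point of $\spec(A_\infty)$ lies within~$\eps$ of $\spec(\bA_n)$ (the ``covering'' inclusion). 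The first is the hard, new-content half; the second is soft.

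For the covering inclusion I would argue by local weak convergence. Independent random matchings/permutations create only $O_L(1)$ cycles of length~$\le L$ in expectation, so whp a $(1 - o_n(1))$ fraction of the vertices of the scalar extension graph $\bG_n$ have their radius-$L$ neighborhood isomorphic, as an edge-weighted graph, to that of a root vertex of $\mathfrak{X}_p$; and since a finite tree has no cycles, any $\pm1$ edge-signing of such a neighborhood is switching-equivalent to the all-$+1$ one, so this isomorphism may be taken to respect signs up to a gauge transformation, which leaves $\spec(\bA_n)$ unchanged. Now fix $\lambda \in \spec(A_\infty)$ and $\eta > 0$: the spectral projection of $A_\infty$ onto $[\lambda - \eta, \lambda + \eta]$ is nonzero, and truncating a unit vector from its range to a large but finite ball about the root of $V_\infty$ produces a finitely supported $\psi$ with $\|A_\infty \psi - \lambda \psi\| \le 2\eta \|\psi\|$. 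Transplanting $\psi$ onto a tree-like vertex of $\bG_n$ (choosing $L$ larger than the support radius of $\psi$) yields an approximate eigenvector of $\bA_n$ of the same quality, so $\spec(\bA_n)$ meets $[\lambda - 2\eta, \lambda + 2\eta]$; taking $\eta = \eps/2$ and a union bound over a finite $\eps$-net of the compact set $\spec(A_\infty)$ finishes this half whp.

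For the no-outliers inclusion I would invoke~\cite{BC19} directly: its main theorem asserts that the random lift matrices are strongly asymptotically free, i.e.\ $\|q(\bcalL_n)\| \le \|q(\calL_\infty)\| + \eps$ whp for every self-adjoint matrix polynomial~$q$, and applying this with $q$ ranging over $(p - z)^*(p - z) - t$ for $(z,t)$ in a fine net localizes $\spec(\bA_n)$ inside the $\eps$-neighborhood of $\spec(A_\infty)$. The engine behind this bound in~\cite{BC19} is a matrix-coefficient generalization of Bordenave's non-backtracking (Ihara--Bass-type) linearization from his proof of Friedman's theorem~\cite{Fri08,Bor20}: one expresses the resolvent of $\bA_n$ through a non-backtracking operator, runs the trace method on large powers of an appropriately centered and decomposed version of it, and uses that distinct non-backtracking walks in random permutations are nearly uncorrelated. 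The one remaining point to reconcile is the ``trivial spectrum'': for \emph{un}signed lifts the common fixed space of all the lift matrices is the span of the all-ones vector, i.e.\ $\C\mathbf{1} \otimes \C^r$, on which $\bA_n$ acts as a fixed Hermitian $r\times r$ matrix (namely $\sum_w a_w$), contributing up to~$r$ eigenvalues that may sit far from $\spec(A_\infty)$. Here I would use that the independent $\pm1$ signs destroy this fixed space (whp it collapses to~$\{0\}$) and, more to the point, that \cite[Thms.~1.9,~10.10]{OW20} --- by coupling a signed $n$-lift to an unsigned $\Z_2$-extension and applying~\cite{BC19} to the latter --- shows that the \emph{entire} spectrum of $\bA_n$, not merely its ``nontrivial'' part, converges to $\spec(A_\infty)$. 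Combining the two inclusions and rescaling $\eps, \beta$ completes the proof. The main obstacle is, unavoidably, this no-outliers half: it rests on the full strength of~\cite{BC19}, and the only genuinely new work over prior results is the bookkeeping that routes the signed-lift model through~\cite{BC19,OW20} so as to dispose of the trivial eigenvalues.
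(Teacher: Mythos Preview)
The paper does not give its own proof here: it presents the statement as a direct citation of \cite[Theorem~2]{BC19} combined with the edge-signing refinement \cite[Thms.~1.9,~10.10]{OW20}, exactly the ingredients and the $\Z_2$-extension route you identify. Your proposal is correct and coincides with the paper's treatment; the only superfluous element is your separate local-transplant argument for the covering inclusion, since both Hausdorff inclusions already follow from the strong-convergence conclusion of~\cite{BC19} (norm convergence for \emph{all} matrix polynomials~$q$ yields two-sided spectral convergence for each fixed~$p$).
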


\subsection{Random lifts as optimization problems}
Given a Hermitian (i.e., self-adjoint) matrix $A \in \C^{n \times n}$, we are interested in the task of maximizing $x^\top A x$ over all Boolean vectors $x \in \{\pm 1\}^n$.
(Since $A$ is Hermitian, the quantity $x^\top A x$ is always real, so this maximization problem makes sense.)
This is the same as maximizing the homogeneous degree-$2$ (commutative) polynomial $\sum_{i,j} A_{ij} x_i x_j$ over $x \in \{\pm 1\}^n$, and it is also essentially the same task as the Max-Cut problem on (scalar-)weighted undirected graphs.
More precisely, if $G$ is a weighted graph on vertex set~$[n]$ with adjacency matrix~$A$, then $G$'s maximum cut is indicated by the $x \in \{\pm 1\}^n$ that maximizes $x^\top (-A) x$.
For the sake of scaling we will also include a factor of~$\frac1n$ in this optimization problem, leading to the following definition:
\begin{definition}[Optimal value]
	Given a Hermitian matrix $A \in \C^{n \times n}$, we define
	\[
	 \OPT(A) = \sup_{x \in \{\pm 1 \}^n} \braces*{\tfrac{1}{n} x^\transp A x} = \sup_{x \in \left\{\pm \frac{1}{\sqrt{n}} \right \}^n} \braces*{x^\transp A x}.
	\]
    (For finite-dimensional~$A$, the $\sup$s and $\inf$s mentioned in this section are all achieved.)
\end{definition}
We remark that
\[
     x^\transp A x = \tr(x^\transp A x) = \tr(x x^\top A) = \la x x^\top, A \ra,
\]
where we use the notation $\la B,C \ra = \tr(B C)$.  Thus we also have
\[
    \OPT(A) = \sup_{\rho \in \mathrm{Cut}_n} \braces*{\tfrac1n\la \rho, A \ra},
\]
where $\mathrm{Cut}_n$ is the ``cut polytope'', the convex hull of all matrices of the form $x x^\transp$ for $x \in \{\pm 1\}^n$.
(Since $\la \rho, A \ra$ is linear in $\rho$, maximizing over the convex hull is the same as maximizing over the extreme points, which are just those matrices of the form~$x x^\transp$.)

The above optimization problem has a natural relaxation: maximizing $\tfrac1n x^\transp A x$ over all \emph{unit} vectors~$x$.
This leads to the following efficiently computable upper bound on $\OPT(A)$:
\begin{definition}[Eigenvalue bound]    \label{def:eig-bound}
	Given a Hermitian matrix $A \in \C^{n \times n}$, we define the \emph{eigenvalue bound} to be
	\[
    	\Eig(A) = \sup\{\la \rho, A \ra : \rho \pgeq 0,\ \tr(\rho) = 1\},
	\]
    where here $\rho \pgeq 0$ denotes that $\rho$ is (Hermitian and) positive semidefinite.
\end{definition}
The matrices $\rho$ being optimized over in~$\Eig(A)$ are known as \emph{density matrices}; i.e., $\Eig(A)$ is the maximal inner product between~$A$ and any density matrix.
Note that if $\varrho \in \mathrm{Cut}_n$, then $\rho = \tfrac1n \varrho$ is a density matrix.
Thus, $\Eig(A)$ is a relaxation of $\OPT(A)$, or in other words, $\OPT(A) \leq \Eig(A)$.

The set of density matrices is convex, and it's well known that its extreme points are all the rank-$1$ density matrices; i.e., those $\rho$ of the form $xx^\transp$ for $x \in \C^{n}$ with $\|x\|_2^2 = 1$.
Thus in $\Eig(A)$ it is equivalent to just maximize over these extreme points:
	\[
		\Eig(A) = \sup_{\substack{x \in \C^n \\ \|x\|^2_2 = 1}} \braces*{\langle xx^\transp, A \rangle}
                   = \sup_{\substack{x \in \C^n \\ \|x\|^2_2 = 1}} \braces*{x^\transp A x}.
	\]
From this formula we see that $\Eig(A)$ is also equal to $\lambda_{\max}(A)$, the maximum eigenvalue of~$A$; hence the terminology ``eigenvalue bound''.
One may also think of  $\Eig(A)$ and $\lambda_{\max}(A)$ as SDP duals of one another.

We now mention another well known, tighter, upper bound on $\OPT(A)$.
\begin{definition}[Basic SDP bound]
	Given a Hermitian matrix $A \in \C^{n \times n}$, the \emph{basic SDP bound} is defined to be
	\[
		\GW(A) = \sup \{\la \rho, A \ra : \rho \pgeq 0,\ \rho_{ii} = \tfrac{1}{n}, \forall i\}.
	\]
\end{definition}
Recall that an $n \times n$ matrix $\varrho$ is a \emph{correlation matrix}~\cite{Sty73} if it is PSD and has all diagonal entries equal to~$1$.
Thus $\GW(A)$ is equivalently maximizing $\tfrac1n \la \varrho, A \ra$ over all correlation matrices~$\varrho$.
We also note that any cut matrix is a correlation matrix, and any correlation matrix is a density matrix, hence so
\[
    \OPT(A) \leq \GW(A) \leq \Eig(A).
\]

\begin{definition}[Dual SDP bound]
    The semidefinite dual of $\SDP(A)$ is the following~\cite{DP93}:
    \[
    	\GWDual(A) = \inf_{\substack{\oldu \in \mathbb{R}^n \\ \avg(\oldu_1, \ldots, \oldu_n) = 0}} \{\lambda_{\max}(A + \diag(\oldu))\}.
    \]
\end{definition}
Despite the fact that the usual ``Slater condition'' for strong SDP duality fails in this case (because the set of correlation matrices isn't full-dimensional), one can still show~\cite{PR95} that $\GW(A) = \GWDual(A)$ indeed holds for finite-dimensional~$A$.

\begin{remark}
    In this work we frequently consider matrix-weighted graphs with adjacency matrices~$A$, thought of as $n \times n$ matrices with entries from~$\C^{r \times r}$.
    For such matrices, whenever we write $\Opt(A)$, we mean $\Opt(\wt{A})$ for the $nr \times nr$ ``extension'' matrix~$\wt{A}$ (see \Cref{def:extension}), and similarly for  $\Eig(A)$, $\lambda_{\max}(A)$, $\GW(A)$, $\GWDual(A)$.
\end{remark}

As mentioned in \Cref{sec:intro}, the eigenvalue bound $\lambda_{\max}(A)$ makes sense when $A$ is the adjacency matrix of an infinite graph (with bounded degree).
However $\GW(A)$ does not extend to the infinite case, as the number ``$n$'' appearing in its definition is not finite.
On the other hand, we now introduce a new, intermediate, ``maximum eigenvalue-like'' bound that is appropriate for matrix-weighted graphs.
This is the ``\geneig bound'' appearing in the statement of our main \Cref{thm:ourmain}.
In the following \Cref{sec:sdp-duality}, we will show that it generalizes well to the case of infinite graphs.
\begin{definition}[\Geneig bound]
    Let $A$ be an $n \times n$ Hermitian matrix with entries from~$\C^{r \times r}$.
    We define its \emph{\geneig bound} to be
	\[
		\GenEig(A) = \sup \{\la \rho, A \ra  : \rho \pgeq 0,\ \tr(\rho) \in \tfrac1r \mathrm{Corr}_r\},
	\]
    where:
    \begin{itemize}
        \item the matrices $\rho$ are also thought of as $n \times n$ matrices with entries from~$\C^{r \times r}$;
        \item $\la \rho, A \ra$ is interpreted as $\la \wt{\rho}, \wt{A} \ra$;
        \item $\tr(\rho)$ denotes the sum of the diagonal entries of~$\rho$, which is an $r \times r$ matrix;
        \item $\mathrm{Corr}_r$ is the set of $r \times r$ correlation matrices;
        \item in other words, the final condition is that $\tr(\rho)_{ii} = \frac1r$ for all $i \in [r]$.
    \end{itemize}
\end{definition}

\begin{remark}
    As mentioned, the \geneig bound can be viewed as ``intermediate'' between the eigenvalue bound and the SDP bound.
    To explain this, suppose $A$ is an $n \times n$ Hermitian matrix.
    On one hand, we can regard $A$ as an $n \times n$ matrix with $1 \times 1$ matrix entries ($r = 1$); in this viewpoint, $\GenEig(A) = \Eig(A)$.
    On the other hand, we can regard $A$ as a $1 \times 1$ matrix with a single $n \times n$ matrix entry ($r = n$); in this viewpoint, $\GenEig(A) = \GW(A)$.
\end{remark}

It is easy to see that the \geneig bound indeed has an SDP formulation, and we now state its SDP dual:
\begin{definition}
    The SDP dual of $\GenEig(A)$ is the following:
	\[
		 \GenEigDual(A) = \inf_{\substack{\oldu \in \mathbb{R}^r \\ \avg(\oldu_1, \ldots, \oldu_r) = 0}} \{\lambda_{\max}(A + \Id_{n \times n} \otimes \diag(\oldu) )\}.
	\]
\end{definition}
Weak SDP duality, $\GenEig(A) \leq \GenEigDual(A)$, holds as always, but again it is not obvious that strong SDP duality holds.
In fact, not only does strong duality hold, it even holds in the case of \emph{infinite} matrices~$A$.
This fact is crucial for our work, and proving it the subject of the upcoming technical section.

\section{The infinite SDPs}   \label{sec:sdp-duality}

This technical section has two goals.
First, in \Cref{sec:dual2} we show that strong duality holds with $\GenEig(A) = \GenEigDual(A)$, even for infinite matrices~$A$ with $r \times r$ entries.
Even in the finite case this is not trivial, as the feasible region for the SDP $\GenEig(A)$ is not full-dimensional, and hence the Slater condition ensuring strong duality does not apply.
The infinite case involves some additional technical considerations, needed so that we may eventually apply the strong duality theorem for conic linear programming of Bonnans and Shapiro~\cite[Thm.~2.187]{BS00}.
Second, in \Cref{sec:dual-finite}, we show that in the optimization problem $\GenEig(A)$, values arbitrarily close to the optimum can be achieved by matrices~$\rho$ of finite support (i.e., with only finitely many nonzero entries).
Indeed (though we don't need this fact), these finite-rank~$\rho$ need only have rank at most~$r$.
This fact is familiar from the case of $r = 1$, where the optimizer in the eigenvalue bound \Cref{def:eig-bound} is achieved by a $\rho$ of rank~$1$ (namely $\ketbra{\psi}{\psi}$ for any maximum eigenvector~$\ket{\psi}$).
Finally, in \Cref{sec:dual-wrapup} we consolidate all these results into a theorem statement suitable for use with graphs produced by infinite lifts of matrix polynomials.

\subsection{SDP duality} \label{sec:dual2}
Let $V$ be a countable set and write $\calH = \ell_2(V)$ for the associated (complex, separable) Hilbert space of square-summable functions $f : V \to \C$.
We write $B_{00}(\calH)$,  $B_1(\calH)$, $B(\calH)$ for the spaces of finite-rank, trace-class, and bounded operators on~$\calH$, respectively.
Focusing on $B_1(\calH)$ (with the weak topology) and $B(\calH)$ (with the $\sigma$-weak topology), these are locally convex topological vector spaces forming a dual pair with bilinear map $\la \cdot, \cdot \ra : B_1(\calH) \times B(\calH) \to \C$ defined by $\la \rho, a \ra = \tr(\rho a)$ (see~\cite[Thm.~VI.26]{RS80}, \cite[Thm.~B.146]{Lan17}, or~\cite[Prop.~2.4.3]{BR02}).
Recall~\cite[Lem.~B.142]{Lan17} that $\la \rho, a \ra \leq \|\rho\|_1 \|a\|$, where $\|\rho\|_1 = \tr \sqrt{\rho^* \rho}$ is the trace-norm.

Write $B_1(\calH)_{\mathrm{sa}}$ (respectively, $B(\calH)_{\mathrm{sa}}$) for the (closed) real subspace of self-adjoint operators in~$B_1(\calH)$ (respectively, $B(\calH)_{\mathrm{sa}}$); note that $B_1(\calH)_{\mathrm{sa}}$ and $B(\calH)_{\mathrm{sa}}$ continue to form a dual pair (see, e.g.,~\cite[p.~212]{Mey94}).
Recall that $a \in B(\calH)_{\mathrm{sa}}$ is positive semidefinite if and only if $\braket{\varphi|a|\varphi}\geq 0$ for all $|\varphi\rangle \in \calH$ (for such~$a$ we have $\sqrt{a^* a} = a$); as usual we write $b \pgeq a$ to mean that $b-a$ is positive semidefinite.
Letting $B_1(\calH)_{+}$ (respectively, $B(\calH)_{+}$)  denote the positive semidefinite operators in $B_1(\calH)_{\mathrm{sa}}$ (respectively, $B(\calH)_{\mathrm{sa}}$), we have that these are both (nonempty) closed, convex cones (for  $B(\calH)_{+}$, see \cite[Prop.~3.7]{Con90}, \cite[Prop.~C.51]{Lan17}; for $B_1(\calH)_{+}$ see \cite[Sec.~4]{Fri19},  \cite[App.~A.2]{DE20}, \cite[Sec.~2]{SH08}); further, they are topologically dual cones~\cite[App.~A.2]{DE20}.

\paragraph{Our SDP.}
We now introduce an SDP that is equivalent to our \geneig SDP; however, we will express it with scalar entries (rather than matrix entries).
Fix any $a \in B(\calH)_{\mathrm{sa}}$.
Let $V = V_1 \sqcup V_2 \sqcup \cdots \sqcup V_r$ be a partition of~$V$ into $r \in \N_+$ nonempty parts, and for $1 \leq j \leq r$ let $I_j \in B(\calH)_{\mathrm{sa}}$ be the operator $\sum_{v \in V_j} \ketbra{v}{v}$.
Consider the following semidefinite program:
\begin{equation}
    \sup_{\rho \in B_1(\calH)_{+}} \la \rho, a \ra  \quad \text{subject to}\quad \la \rho, I_j \ra = \tfrac1r,\ j = 1 \dots r.\tag{SDP-P}
\end{equation}
To relate this back to our definition of $\GenEig(A)$ (for infinite matrices), suppose that~$A$ is a self-adjoint matrix, indexed by countable vertex~$V_\infty$, with $r \times r$ entries (and only finitely many nonzero entries per row/column).
Let $a = \wt{A}$ be its ``extension'', with rows/columns indexed by $V = V_\infty \times [r]$, and let $V_j = V_\infty \times \{j\}$.
Then (SDP-P) is precisely how we define $\GenEig(A)$.

We remark that (SDP-P) is always feasible.
By summing the constraints on~$\rho$ we get $\tr(\rho) = 1$; i.e., $\rho$ is required to be a density operator.
In particular, $\la \rho, a \ra \leq \tr(\rho) \|a\|$ always so the optimum value of~(SDP-P) is finite.

We would like to show that the optimum value of (SDP-P) is equal to that of the following dual semidefinite program which, in the setup mentioned above, is equivalent to our definition of $\GenEigDual(A)$:
\begin{equation}
    \inf_{\oldu \in \R^r} \lambda_{\max}(a + \oldu_1 I_1 + \cdots + \oldu_r I_r) \quad \text{subject to}\quad \avg(\oldu_1, \dots, \oldu_r)  = 0. \tag{SDP-D}
\end{equation}
Showing this will take a couple of steps.
The first is to mechanically write down the Lagrangian dual of (SDP-P), which is:
\begin{align}
    \inf_{y \in \R^r} \tfrac1r \sum_{j}y_j  \quad \text{subject to}\quad a \pleq y_1 I_1 + \cdots + y_r I_r.\tag{D1}
\end{align}
We first show that (D1) is equivalent to (SDP-D).
We may reparameterize all $y \in \R^r$ by defining $\ol{y} = \tfrac1r \sum_j y_j$ and writing $y_j = \ol{y} - \tfrac1r$; in this way, every $y \in \R^r$ corresponds to a pair $\ol{y} \in \R$ and $\oldu \in \R^r$ satisfying $\avg(\oldu_1, \dots, \oldu_r) = 0$, and vice versa.
Under this reparameterization, (D1)~becomes
\begin{align}
    \inf_{\substack{\ol{y} \in \R \\ \oldu \in \R^r }} \ol{y}   \quad \text{subject to} \quad \avg(\oldu_1, \dots, \oldu_r) = 0 \text{ and } a \pleq \ol{y} I - \sum_j \oldu_j I_j .\tag{D2}
\end{align}
The second constraint here is equivalent to $\lambda_{\max}(a + \sum_j \oldu_j I_j) \leq \ol{y}$,  and hence the optimal choice of $\ol{y}$ given $\oldu$ is achieved by $\lambda_{\max}(a + \sum_j \oldu_j I_j)$.
Thus (D1) is equivalent to (D2) is equivalent to (SDP-D), as claimed.

We would next like to claim that strong duality holds for the dual pair (SDP-P) and~(D1); however, there is a difficulty because the feasible region of (SDP-P) only has nonempty \emph{relative interior}, not interior. Alternatively, one might say the difficulty is that the feasible region for~(D1) is not bounded.
However we can fix this by introducing the following equivalent bounded variant:
\begin{equation}
    \inf_{y \in \R^r} \tfrac1r\sum_jy_j  \quad \text{subject to}\quad a \pleq y_1 I_1 + \cdots + y_r I_r \text{ and } |y_j| \leq C,\ j = 1 \dots r,\tag{D3}
\end{equation}
where $C = 2r\|a\|$.

\begin{claim}
    \textnormal{(D3)} is equivalent to \textnormal{(D1)}.
\end{claim}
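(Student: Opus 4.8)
The plan is to prove equality of optimal values by showing that imposing the box constraints $|y_j| \le C$ in (D3) discards no near-optimal point of (D1). Since the feasible region of (D3) is contained in that of (D1) with the identical objective, we get $\mathrm{val}(\text{D3}) \ge \mathrm{val}(\text{D1})$ for free; all the content is in the reverse inequality, and for that I would exhibit, for each $\eps>0$, a feasible point of (D1) that is $\eps$-nearly optimal and automatically satisfies the box constraints.

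Two ingredients feed into this. First, a cheap \emph{upper} bound on $\mathrm{val}(\text{D1})$: the point $y_j = \|a\|$ for all $j$ is feasible, since $\sum_j \|a\| I_j = \|a\| \Id \pgeq a$ (using $\sum_j I_j = \Id$ and the operator-norm bound $a \pleq \|a\|\Id$), so $\mathrm{val}(\text{D1}) \le \|a\|$. Second, the key step: a \emph{pointwise lower} bound $y_j \ge -\|a\|$ valid for \emph{every} $y$ feasible for (D1). I would obtain this by testing the operator inequality $a \pleq \sum_k y_k I_k$ against a single basis vector $\ket{v}$ with $v \in V_j$ — legitimate precisely because each $V_j$ is nonempty — noting that $\braket{v|I_k|v}$ equals $1$ when $k = j$ and $0$ otherwise, so that $\braket{v|a|v} \le y_j$, and then $y_j \ge \braket{v|a|v} \ge -\|a\|$.

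Combining the two: given $\eps \in (0,\|a\|/r]$, pick $y$ feasible for (D1) with $\tfrac1r\sum_j y_j < \mathrm{val}(\text{D1}) + \eps \le \|a\| + \eps$, so $\sum_k y_k < r\|a\| + r\eps$; then for each $j$, the pointwise lower bound on the other coordinates gives $y_j = \sum_k y_k - \sum_{k\ne j} y_k < r\|a\| + r\eps + (r-1)\|a\| \le 2r\|a\| = C$, while $y_j \ge -\|a\| \ge -C$. Hence $y$ is feasible for (D3), so $\mathrm{val}(\text{D3}) < \mathrm{val}(\text{D1}) + \eps$; letting $\eps \to 0$ finishes it. (The degenerate case $a = 0$, where $C = 0$, is immediate: $y = 0$ is optimal for both.) I do not anticipate a genuine obstacle here — the only thing to get right is the pointwise lower bound, which is exactly the step that relies on the parts $V_j$ being nonempty, the hypothesis that was built into the definition of (SDP-P).
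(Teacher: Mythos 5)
Your proposal is correct and follows essentially the same route as the paper's proof: the pointwise lower bound $y_j \ge \braket{v_j|a|v_j} \ge -\|a\|$ obtained by testing the operator inequality against a basis vector of $V_j$, the upper bound $\mathrm{val}(\text{D1}) \le \|a\|$ from the feasible point $y \equiv \|a\|$, and the combination of the two to show every near-optimal point of (D1) lies in the box $|y_j| \le 2r\|a\|$. The only cosmetic difference is that you run the argument with an explicit $\eps$ of slack rather than restricting to points of objective value at most $\|a\|$; both are fine.
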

\begin{proof}
    Suppose we have any feasible solution $y$ for~(D1).
    First observe that for any $1 \leq j \leq r$, if we take an arbitrary
    $v_j \in V_j$ we have
    \begin{equation} \label[ineq]{ineq:lower}
        y_1 I_1 + \cdots + y_r I_r  \pgeq a \implies y_j \geq \braket{v_j | a | v_j} \geq -\|a\|.
    \end{equation}
    Next observe that the optimum value of~(D1) is at most~$\|a\|$ (since we could take $y_j = \|a\|$ for all~$j$); thus we may restrict attention to $y$'s that achieve objective value at most~$\|a\|$.
    But now we claim any such feasible~$y$ will have $y_k \leq 2r\|a\|$ for any~$1 \leq k \leq r$.
    To see this, observe (using \Cref{ineq:lower}) that
    \[
        \tfrac1r \sum_j y_i \geq \tfrac1r (y_k - \sum_{j \neq k}\|a\|) \geq y_k/r - \|a\|.
    \]
    Thus if $y_k > 2r\|a\|$, it is strictly not optimal.
    We conclude that we may add the constraints $-\|a\| \leq y_j \leq 2r\|a\|$ for~$j = 1 \dots r$ to~(D1) without changing it.
    Clearly it doesn't hurt to widen this interval to $-C \leq y_j \leq C$ (for the sake of symmetry), and so we conclude that (D3)~is indeed equivalent to~(D1).
\end{proof}

We can now write the Lagrangian dual of~(D3), which is
\begin{equation}
    \sup_{\rho \in B_1(\calH)_{+}} \la \rho, a \ra - C\sum_{j=1}^r \abs{\la \rho, I_j \ra - \tfrac1r}.  
    \tag{P3}
\end{equation}
\begin{claim}   \label{claim:P-equivalent}
    \textnormal{(P3)} is equivalent to \textnormal{(SDP-P)}.
\end{claim}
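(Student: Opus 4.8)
The plan is to establish the two inequalities between the optimum values of (P3) and (SDP-P) separately. One direction is immediate: if $\rho$ is feasible for (SDP-P) then $\la\rho, I_j\ra = \tfrac1r$ for every $j$, so the penalty term of (P3) vanishes at $\rho$ and its (P3)-objective is just $\la\rho, a\ra$; taking suprema shows the optimum of (SDP-P) is at most that of (P3).

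For the other direction it suffices to exhibit, for each $\rho \in B_1(\calH)_{+}$, an operator $\rho^\sharp$ that is feasible for (SDP-P) and satisfies $\la\rho^\sharp, a\ra \geq \la\rho, a\ra - C\sum_{j}\abs{\la\rho, I_j\ra - \tfrac1r}$; taking suprema then gives that the optimum of (P3) is at most that of (SDP-P). Write $t_j = \la\rho, I_j\ra \geq 0$ and $t^* = \max_j t_j$, and for each $j$ fix an arbitrary $v_j \in V_j$ (using that the parts $V_j$ are nonempty). If $t^* \leq \tfrac1r$, put $\rho^\sharp = \rho + \sum_j(\tfrac1r - t_j)\ketbra{v_j}{v_j}$; this is positive semidefinite and trace-class, it satisfies $\la\rho^\sharp, I_j\ra = \tfrac1r$ for all $j$, and since $\abs{\braket{v_j|a|v_j}} \leq \|a\|$ and all $t_j \leq \tfrac1r$ its objective has dropped by at most $\|a\|\sum_j(\tfrac1r - t_j) = \|a\|\sum_j\abs{t_j - \tfrac1r} \leq C\sum_j\abs{t_j - \tfrac1r}$. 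If instead $t^* > \tfrac1r$, I would first \emph{equalize upward}, setting $\sigma = \rho + \sum_j(t^* - t_j)\ketbra{v_j}{v_j}$, so that $\la\sigma, I_j\ra = t^*$ for every $j$ and $\tr(\sigma) = r t^*$; then \emph{rescale down}, setting $\rho^\sharp = \tfrac{1}{r t^*}\sigma$, a density operator with $\la\rho^\sharp, I_j\ra = \tfrac1r$ for all $j$.

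The objective bookkeeping for this second case is the only real computation. The equalization step loses at most $\|a\|\bigl(r t^* - \tr(\rho)\bigr) \geq 0$, and scaling a positive semidefinite operator of trace $r t^*$ by the factor $\tfrac{1}{r t^*} < 1$ changes $\la\cdot, a\ra$ by at most $\bigl(1 - \tfrac{1}{r t^*}\bigr)\|a\|\, r t^* = \|a\|(r t^* - 1) \geq 0$ (and not at all if $\la\sigma, a\ra \leq 0$), so altogether $\la\rho^\sharp, a\ra \geq \la\rho, a\ra - \|a\|\bigl(2 r t^* - \tr(\rho) - 1\bigr)$. One then finishes with three elementary facts: the maximizing block contributes $t^* - \tfrac1r$ to $\sum_j(t_j - \tfrac1r)^+$, so $r t^* \leq 1 + r\sum_j(t_j - \tfrac1r)^+$; next, $\tr(\rho) - 1 = \sum_j(t_j - \tfrac1r)^+ - \sum_j(\tfrac1r - t_j)^+$; and $t^* \geq \tfrac1r$ in this case. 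Combining them gives $2 r t^* - \tr(\rho) - 1 \leq \sum_j(\tfrac1r - t_j)^+ + (2r - 1)\sum_j(t_j - \tfrac1r)^+ \leq (2r - 1)\sum_j\abs{t_j - \tfrac1r}$, which is the claimed bound (with a little room, since $C = 2r\|a\|$ exceeds the $(2r - 1)\|a\|$ actually used).

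The one point worth flagging is why the more obvious repair fails. Directly shrinking each over-full block back down to weight $\tfrac1r$ by a block-diagonal congruence $\rho \mapsto D\rho D$ keeps $\rho^\sharp$ positive semidefinite and feasible, but such a congruence can alter $\la\rho, a\ra$ by an amount proportional to $\tr(\rho)$ rather than to the constraint violation $\sum_j\abs{t_j - \tfrac1r}$ (already visible when $\rho$ is a rank-one projection onto a vector with a tiny component inside an over-full block and a large component, aligned with $a$, outside it). Overshooting via cheap rank-one additions and then correcting with a single global rescaling avoids this, because a global rescaling is the one modification whose effect on $\la\cdot, a\ra$ is governed entirely by the trace and the scaling factor. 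This is the step I expect to be the crux.
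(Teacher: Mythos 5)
Your proof is correct and takes essentially the same route as the paper's: both repair an arbitrary $\rho$ into an (SDP-P)-feasible $\rho^\sharp$ by combining a global rescaling with rank-one additions $\ketbra{v_j}{v_j}$ at fixed representatives $v_j \in V_j$, and both bound the resulting objective loss by the penalty term using the slack built into $C = 2r\|a\|$. The only difference is cosmetic --- you pad the blocks up to $t^*$ and then rescale, while the paper rescales first and then pads up to $\tfrac1r$ --- so the two arguments are interchangeable.
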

\begin{proof}
    Clearly the objective value of (SDP-P) is at most that of~(P3), so it suffices to show the converse inequality.
    To that end, let $\rho$ be any feasible solution for~(P3), and write the objective value as $o(\rho) - p(\rho)$, where $o(\rho) = \la \rho, a \ra$ (``objective'') and $p(\rho) = C \sum_j \abs{\langle \rho, I_j \rangle - \tfrac1r}$ (``penalty'').

    Let $\eps = (\max_j\{r\la \rho, I_j \ra\} - 1)_+$, so that $\la \rho, I_j \ra \leq \tfrac1r(1+\eps)$ for all~$j$, and write $\wt{\rho} = \rho/(1+\eps) \in B_1(\calH)_+$.
    Note that $p(\wt{\rho}) = C \sum_j \delta_j$, where $\delta_j = \tfrac1r - \la \wt{\rho}, I_j \rangle \geq 0$.
    Finally, define $\rho' = \wt{\rho} + \sum_j \delta_j\!\ketbra{v_j}{v_j} \in B_1(\calH)_+$, where each $v_j$ is an arbitrary element of~$V_j$.
    By construction, $p(\rho') = 0$.
    We now want to show
    \begin{equation}    \label[ineq]{ineq:showme}
        o(\rho') \geq o(\rho) - p(\rho) \quad\iff\quad p(\rho) \geq o(\rho) - o(\rho').
    \end{equation}
    We will then have that $\rho'$ is feasible for (SDP-P), with an objective value at least that of~$\rho$'s in~(P3); this will complete the proof.

    First, we have
    \[
        p(\rho) = C \sum_j \abs{\langle \rho, I_j \rangle - \tfrac1r} = C \sum_j \abs{ (1+\eps)\la\wt{\rho}, I_j \ra - \tfrac1r} = 2r\|a\|\sum_j \abs{\eps/r - (1+\eps)\delta_j}.
    \]
    On the other hand, we have
    \[
        o(\rho) - o(\rho') = \la \rho, a \ra - \tfrac{1}{1+\eps}\la \rho, a \ra - \delta_j \sum_j \braket{v_j|a|v_j} \leq \tfrac{\eps}{1+\eps} \la \rho, a \ra + \|a\|\sum_j \delta_j \leq \|a\|\parens{\tfrac{\eps}{1+\eps} \tr \rho + \sum_j \delta_j}.
    \]
    Summing the inequality $\la \rho, I_j \ra \leq \tfrac1r(1+\eps)$ yields $\tr \rho \leq 1+\eps$, so to establish \Cref{ineq:showme} it remains to show
    \begin{equation}    \label[ineq]{ineq:showme2}
        2r\sum_j \abs{r\eps - (1+\eps)\delta_j} \geq \eps +  \sum_j \delta_j \quad\iff\quad \sum_j\left(2r \abs{r\eps - (1+\eps)\delta_j} - \delta_j\right) \geq \eps.
    \end{equation}
    This is certainly true if $\eps = 0$, so it remains to assume $\eps > 0$.
    Then writing $j_0$ for the $j$ achieving $\max_j \{r\la \rho, I_j \ra\}$, we have $\delta_{j_0} = 0$ by construction.
    This gives us a contribution of $2\eps$ to the sum on the right in \Cref{ineq:showme2}.
    On the other hand, for $j \neq j_0$, the worst possible value for~$\delta_j$ is $\frac{\eps}{r(1+\eps)}$, in which case the contribution to the sum is $-\frac{\eps}{r(1+\eps)}$; summing this over all $j \neq j_0$ yields at worst $-\frac{\eps}{1+\eps}$.
    Thus we have shown the sum in \Cref{ineq:showme2} is at least $2\eps - \frac{\eps}{1+\eps}  \geq \eps$, as needed.
\end{proof}

By inspecting the above proof we may extract the following lemma, which will be useful later:
\begin{lemma}                                       \label{lem:fix-sdp}
    Suppose $\rho \in B_1(\calH)_+$ has $|\la \rho, I_j \ra - \tfrac1r| \leq \eta$ for each $1 \leq j \leq r$.
    Then there is $\rho' \in B_1(\calH)_+$ satisfying $\la \rho', I_j \ra = \tfrac1r$ for each~$j$ and with $\abs{\la \rho, a \ra - \la \rho', a \ra} \leq Cr \eta$.
    Furthermore, if $\rho$ has finite rank (respectively, support) then so too does~$\rho'$.
\end{lemma}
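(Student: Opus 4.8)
The plan is to extract a quantitative version of the repair construction already carried out inside the proof of \Cref{claim:P-equivalent}, now starting from a merely $\eta$-approximately feasible $\rho$ and keeping track of the $\eta$-dependence. First I would set $\eps = (\max_j\{r\la\rho,I_j\ra\} - 1)_+$; since the hypothesis gives $\la\rho,I_j\ra \le \tfrac1r+\eta$, this guarantees $0 \le \eps \le r\eta$. Next I would rescale to $\wt\rho = \rho/(1+\eps) \in B_1(\calH)_+$, which by the choice of $\eps$ satisfies $\la\wt\rho,I_j\ra \le \tfrac1r$ for every $j$, and then restore exact feasibility by adding back the ``defect'' at an arbitrary chosen vertex of each part: choosing $v_j \in V_j$ (possible as $V_j \neq \emptyset$) and putting $\delta_j = \tfrac1r - \la\wt\rho,I_j\ra \ge 0$, I would take
\[
    \rho' \;=\; \wt\rho + \sum_{j=1}^r \delta_j\,\ketbra{v_j}{v_j} \;\in\; B_1(\calH)_+.
\]
This is manifestly positive and trace-class, and if $\rho$ is finite-rank or finitely supported then so is $\rho'$, since we have only rescaled $\rho$ and added $r$ rank-one operators living on single basis vectors.

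I would then verify feasibility and estimate the objective. Feasibility is immediate because $v_j \in V_j$ forces $\la\ketbra{v_j}{v_j}, I_k\ra$ to equal $1$ when $k = j$ and $0$ otherwise, so $\la\rho',I_k\ra = \la\wt\rho,I_k\ra + \delta_k = \tfrac1r$ for all $k$. For the objective, from $\la\rho',a\ra = \tfrac{1}{1+\eps}\la\rho,a\ra + \sum_j\delta_j\braket{v_j|a|v_j}$ I would bound
\[
    \bigl|\la\rho,a\ra - \la\rho',a\ra\bigr| \;\le\; \|a\|\Bigl(\tfrac{\eps}{1+\eps}\tr\rho + \sum_j\delta_j\Bigr),
\]
using $|\la\rho,a\ra| \le \|a\|\tr\rho$ and $|\braket{v_j|a|v_j}| \le \|a\|$. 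The key simplifications are $\tfrac{\eps}{1+\eps}\tr\rho = \eps\,\tr\wt\rho = \eps(1-\sum_j\delta_j) \le \eps \le r\eta$, and --- using $\la\rho,I_j\ra \ge \tfrac1r-\eta$, so that $\delta_j \le \tfrac1r\cdot\tfrac{\eps}{1+\eps} + \tfrac{\eta}{1+\eps}$ --- the bound $\sum_j\delta_j \le \eps + r\eta \le 2r\eta$. Together these give $|\la\rho,a\ra - \la\rho',a\ra| \le 3r\eta\|a\| \le 2r^2\eta\|a\| = Cr\eta$ when $r \ge 2$; for $r = 1$ the construction degenerates to a single rescaling or a single bump (according to the sign of $\tr\rho - 1$) and the same identity yields the even smaller bound $\eta\|a\| \le Cr\eta$.

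I expect no genuine obstacle here: the statement is designed to be a bookkeeping corollary of \Cref{claim:P-equivalent}, to be used later for repairing approximately feasible SDP solutions that have been ``pasted'' into a random lift. The points that need a little care are that $\wt\rho$ and the rank-one corrections genuinely lie in $B_1(\calH)_+$ (automatic in the trace-class setting, which is exactly why one uses finite-rank bumps rather than trying to rearrange mass continuously); that the bump $\ketbra{v_j}{v_j}$ perturbs only the $j$-th constraint, which is where the disjointness of the partition $V = V_1 \sqcup \cdots \sqcup V_r$ enters; and that one should peel off the factor $1+\eps$ \emph{before} adding the bumps, so that $\tr\wt\rho \le 1$ and the final constant matches the stated $Cr\eta$ with $C = 2r\|a\|$.
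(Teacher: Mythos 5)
Your proposal is correct and is essentially the paper's own argument: the paper states that \Cref{lem:fix-sdp} is ``extracted'' from the proof of \Cref{claim:P-equivalent}, and your construction (rescale by $1+\eps$ with $\eps=(\max_j\{r\la\rho,I_j\ra\}-1)_+$, then add the defects $\delta_j\ketbra{v_j}{v_j}$) is exactly that construction, with the $\eta$-dependence tracked correctly.
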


Returning to (P3), formally speaking we are regarding it as the conic linear program
\begin{equation}
    \sup_{(\rho, \lambda^+, \lambda^-) \in B_1(\calH)_{+} \oplus \R^r_{\geq 0} \oplus \R^r_{\geq 0}} \la \rho, a \ra - C\sum_{j=1}^r (\lambda^+_j + \lambda^-_j) \quad \text{subject to}\quad \la \rho, I_j \ra - \tfrac1r = \lambda^+_j - \lambda^-_j,\ j = 1 \dots r.
    \tag{P3}
\end{equation}
Not only is this feasible, meaning the vector  $(\tfrac1r, \dots, \tfrac1r)$ is in the region
\[
    D = \Bigl\{\bigl(\la \sigma, I_1 \ra - (\lambda_1^+ - \lambda_1^-), \dots, \la \sigma, I_r \ra  - (\lambda_1^+ - \lambda_1^-)\bigr) : \sigma \in B_1(\calH)_{+},\ \lambda^+ \in \R^r_{\geq 0},\ \lambda^- \in \R^r_{\geq 0} \Bigr\} \subseteq \R^r,
\]
but it is even in the interior of the region, since $D$ is in fact all of~$\R^r$.
We are therefore finally in a position to apply a strong duality theorem for conic linear programming~\cite[Thm.~2.187]{BS00} to the pair (P3),~(D3).
We conclude:
\begin{theorem}
    \textnormal{(SDP-P)} and \textnormal{(SDP-D)} have a common value~$s^*$, and $s^*$ is achieved in~\mbox{\textnormal{(SDP-D)}} by some $\hat{\oldu} \in \R^r$.
    We may therefore write $s^* = \lambda_{\max}(\hat{a})$, where $\hat{a} = a + \sum_j \hat{\oldu}_j I_j$.
\end{theorem}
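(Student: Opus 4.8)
The plan is to deduce the theorem from the strong duality theorem for conic linear programming of Bonnans and Shapiro \cite[Thm.~2.187]{BS00}, whose hypotheses have essentially all been checked above; what remains is to assemble them and translate the conclusion back into the language of (SDP-P) and (SDP-D).

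First I would collect the equivalences already in hand. By \Cref{claim:P-equivalent}, (SDP-P) and (P3) have the same optimal value; and (SDP-D) is equivalent to (D1) (shown directly, via (D2)), which is in turn equivalent to (D3) by the preceding claim. Moreover (P3) was constructed as the Lagrangian dual of (D3), and (D3) is the Lagrangian dual of (P3), so (P3) and (D3) form a genuine primal--dual pair of conic linear programs. Hence it suffices to establish (i) that (P3) and (D3) have no duality gap, and (ii) that the infimum in (D3) is attained.

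For this I would present (P3) in conic standard form, with decision variable $(\rho,\lambda^+,\lambda^-)$ ranging over the closed convex cone $B_1(\calH)_{+} \oplus \R^r_{\geq 0} \oplus \R^r_{\geq 0}$, subject to the $r$ affine equality constraints $\la\rho,I_j\ra - \lambda_j^+ + \lambda_j^- = \tfrac1r$. The structural prerequisites were recorded in the setup: $B_1(\calH)$ and $B(\calH)$, with the weak and $\sigma$-weak topologies, form a dual pair under $\la\cdot,\cdot\ra$; the cones $B_1(\calH)_{+}$ and $\R^r_{\geq 0}$ are closed, convex, and topologically dual to their counterparts; and the objective and constraint maps are continuous. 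The generalized Slater condition I need is that the right-hand side $(\tfrac1r,\dots,\tfrac1r)$ lie in the \emph{interior} of the set $D\subseteq\R^r$ of attainable right-hand sides --- and $D=\R^r$, since already $\rho=0$ with $\lambda^+,\lambda^-$ ranging over $\R^r_{\geq 0}$ sweeps out all of $\R^r$. I would also record that the common value is finite: every feasible $\rho$ for (SDP-P) has $\tr(\rho)=\sum_j\la\rho,I_j\ra=1$, so $\la\rho,a\ra\leq\tr(\rho)\|a\|=\|a\|$, and (SDP-P) is feasible (e.g.\ $\rho=\tfrac1r\sum_j\ketbra{v_j}{v_j}$ for chosen $v_j\in V_j$), so by weak duality and the equivalences all the relevant optimal values lie in a bounded interval. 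Now \cite[Thm.~2.187]{BS00} applies to the pair (P3), (D3): a conic linear program meeting the generalized Slater condition and having finite value has zero duality gap with its dual, and the dual attains its optimum. This yields a common value $s^*$ attained in (D3)/(D1) by some $\hat y\in\R^r$; translating back through the reparameterization $\hat\oldu_j=\tfrac1r\sum_k\hat y_k-\hat y_j$ (so $\avg(\hat\oldu_1,\dots,\hat\oldu_r)=0$) turns $\hat y$ into an optimizer of (SDP-D) with $s^*=\lambda_{\max}(a+\sum_j\hat\oldu_j I_j)=\lambda_{\max}(\hat a)$.

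The point to watch --- indeed the whole reason for this section's detour through (D3) and (P3) --- is the interior-point condition. In infinite dimensions $B_1(\calH)_{+}$ has empty interior, so (SDP-P) has no strictly feasible point and strong duality cannot be applied to the pair (SDP-P)/(SDP-D) directly. Passing to the bounded variant (D3), whose Lagrangian dual (P3) carries the penalty slacks $\lambda^\pm$, moves the interior-point requirement off the infinite-dimensional cone and onto the finite-dimensional constraint space $\R^r$, where ``$D=\R^r$'' makes it automatic; this is the one genuinely nontrivial step. Everything else is bookkeeping: verifying that (P3) and (D3) are literally in the form required by \cite[Thm.~2.187]{BS00} (using the dual-pair and closed-cone facts cited in the preamble), and keeping the change of variables between $y$ and $(\ol y,\oldu)$ consistent.
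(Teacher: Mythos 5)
Your proposal is correct and follows essentially the same route as the paper: reduce to the primal--dual pair (P3)/(D3) via the chain of equivalences already established, verify the generalized Slater condition by noting that the attainable right-hand sides fill all of $\R^r$, apply \cite[Thm.~2.187]{BS00}, and translate the attained dual optimizer back through the $y \leftrightarrow (\ol{y},\oldu)$ reparameterization. The paper's own proof is exactly this assembly step, so nothing further is needed.
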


\subsection{Nearly optimal finite, rank-$r$ solutions} \label{sec:dual-finite}
Although the optimal value~$s^*$ is achieved in~(SDP-D) (that is, its ``$\inf$'' may be written ``$\min$''), the value~$s^*$ might not be achieved in~(SDP-P).
However for any $\delta > 0$, we can find $\rho \in B_1(\calH)_{+}$ satisfying $\la \rho, I_j \ra = \tfrac1r$ for all~$j$ and with $\la \rho, a \ra \geq s^* - \delta$.
Let us further simplify this~$\rho$.

\paragraph{Finite rank.}
First, recall that $B_{00}(\calH)_+$ (the finite-rank positive semidefinite operators on~$\calH$) is dense in~$B_1(\calH)_+$ with respect to the trace-norm (see \cite[Thm.~1.11(d)]{Con85}, with the proof clearly holding under the positive semidefinite restriction).
Thus we can find a finite-rank $\rho_0 \in B_{00}(\calH)_{+}$ satisfying $\abs{\la \rho_0, I_j \ra - \tfrac1r} \leq \delta$ for all~$j$ and with $\la \rho_0, a \ra \geq s^* - \delta -  \|a\| \delta$.
Next, using \Cref{lem:fix-sdp} we can convert this to another finite-rank $\rho_1 \in B_{00}(\calH)_{+}$ satisfying $\la \rho_1, I_j \ra = \tfrac1r$ for all~$j$ and with $\la \rho_1, a \ra \geq s^* - \delta_1$, where $\delta_1 = \delta + \|a\|\delta + C \delta \to 0$ as $\delta \to 0$.

\paragraph{Finite support.}
Since $\rho_1 \in B_{00}(\calH)_{+}$ has trace~$1$ we can write $\rho_1 = \sum_{i=1}^{k_1} \lambda_i \ketbra{\varphi_i}{\varphi_i}$ for some $\lambda = (\lambda_1, \dots, \lambda_{k_1}) \in \R_+^{k_1}$ forming a probability distribution on~$[k_1]$ and some orthonormal vectors $\ket{\varphi_i} \in \ell_2(V)$.
We can approximate each $\ket{\varphi_i}$ by a unit vector $\ket{\varphi'_i}$ of finite support satisfying $\abs{\braket{\varphi_i|\varphi'_i}}^2 \geq 1-\delta_1$, from which it follows (an easy calculation~\cite[(9.60), (9.99)]{NC10}) that $\|e_i\|_1 \leq 2\sqrt{\delta_1}$, where $e_i = \ketbra{\varphi_i}{\varphi_i} - \ketbra{\varphi'_i}{\varphi'_i}$.
Now let $\rho_2 = \sum_{i=1}^{k_1} \lambda_i \ketbra{\varphi'_i}{\varphi'_i}$, a positive semidefinite matrix of finite support.
Then
\[
    \la \rho_2, a \ra = \la \rho_1, a \ra - \sum_{i=1}^{k_1} \lambda_i \la e_i, a \ra \geq s^* - \delta_1 - \sum_{i=1}^{k_1}  \lambda_i \cdot 2\sqrt{\delta_1} \|a\| = s^* - \delta_1 - 2\sqrt{\delta_1} \|a\|,
\]
and one can similarly check that $|\la \rho_2, I_j \ra - \tfrac1r| \leq 2\sqrt{\delta_1}$ for all~$j$.
Again applying \Cref{lem:fix-sdp}, we can convert this to another finite-support $\hat{\rho} \in B_{00}(\calH)_{+}$ satisfying $\la \hat{\rho}, I_j \ra = \tfrac1r$ for all~$j$ and with $\la \rho_1, a \ra \geq s^* - \delta_2$, where $\delta_2 = \delta_1 + 2\sqrt{\delta_1} \|a\| +2C\sqrt{\delta_1} \to 0$ as $\delta \to 0$.
Since $\delta_2$ can be made arbitrarily small by taking $\delta \to 0$, we finally conclude:
\begin{proposition}
    For any $\eps > 0$, there is a feasible finite-support solution $\hat{\rho}$ to~\textnormal{(SDP-P)} achieving objective value at least $s^* - \eps = \lambda_{\max}(\hat{a}) - \eps$.
\end{proposition}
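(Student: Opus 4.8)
The plan is to produce the finite-support solution by starting from a near-optimal exactly-feasible solution of \textnormal{(SDP-P)} and then performing two successive approximations — first replacing it by a finite-rank operator, then by a finite-support one — each time controlling both the drop in objective value and the amount by which the constraints $\la \rho, I_j \ra = \tfrac1r$ are violated, and each time using \Cref{lem:fix-sdp} to restore exact feasibility. Concretely, since the strong-duality theorem of \Cref{sec:dual2} gives that \textnormal{(SDP-P)} has value $s^* = \lambda_{\max}(\hat a)$, for any $\delta > 0$ I may fix a feasible $\rho \in B_1(\calH)_+$ with $\la \rho, I_j \ra = \tfrac1r$ for all $j$ and $\la \rho, a \ra \geq s^* - \delta$, and all remaining work is to clean $\rho$ up while losing only $o_\delta(1)$.

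First I would carry out the finite-rank step: invoke the density of $B_{00}(\calH)_+$ in $B_1(\calH)_+$ with respect to the trace norm (a standard fact whose proof survives the positivity restriction, \cite[Thm.~1.11(d)]{Con85}) to pick $\rho_0 \in B_{00}(\calH)_+$ with $\|\rho-\rho_0\|_1 \leq \delta$. Since each $I_j$ is a projection and $\|a\| < \infty$, the bound $|\la \rho, X\ra - \la \rho_0, X\ra| \leq \|X\|\,\|\rho-\rho_0\|_1$ (applied with $X = I_j$ and $X = a$) shows that $\rho_0$ satisfies $|\la \rho_0, I_j\ra - \tfrac1r| \leq \delta$ and $\la \rho_0, a\ra \geq s^* - \delta - \|a\|\delta$. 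Applying \Cref{lem:fix-sdp} converts $\rho_0$ into a finite-rank $\rho_1 \in B_{00}(\calH)_+$ that is exactly feasible and has $\la \rho_1, a\ra \geq s^* - \delta_1$ with $\delta_1 = \delta + \|a\|\delta + C\delta \to 0$ as $\delta \to 0$.

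Next I would carry out the finite-support step: diagonalize $\rho_1 = \sum_{i=1}^{k_1}\lambda_i \ketbra{\varphi_i}{\varphi_i}$ (the $\lambda_i$ forming a probability distribution since $\tr \rho_1 = 1$), and approximate each eigenvector $\ket{\varphi_i}$ by a finitely supported unit vector $\ket{\varphi_i'}$ — obtained by truncating to a large finite set and renormalizing — with $|\braket{\varphi_i|\varphi_i'}|^2 \geq 1-\delta_1$. The rank-at-most-$2$ error $e_i = \ketbra{\varphi_i}{\varphi_i} - \ketbra{\varphi_i'}{\varphi_i'}$ then satisfies $\|e_i\|_1 \leq 2\sqrt{\delta_1}$ by the pure-state trace-distance/fidelity estimate \cite[(9.60),(9.99)]{NC10}, so $\rho_2 = \sum_i \lambda_i \ketbra{\varphi_i'}{\varphi_i'}$ has finite support, has objective within $2\sqrt{\delta_1}\|a\|$ of $\rho_1$'s, and violates each constraint by at most $2\sqrt{\delta_1}$. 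A final application of \Cref{lem:fix-sdp} yields a finite-support exactly feasible $\hat\rho \in B_{00}(\calH)_+$ with $\la \hat\rho, a\ra \geq s^* - \delta_2$, where $\delta_2 = \delta_1 + 2\sqrt{\delta_1}\|a\| + 2C\sqrt{\delta_1} \to 0$ as $\delta \to 0$; choosing $\delta$ small enough that $\delta_2 \leq \eps$ completes the proof, using $s^* = \lambda_{\max}(\hat a)$.

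I do not expect a genuinely hard step here: positivity is preserved automatically (we only rescale or add positive rank-one terms), and the objective is controlled throughout by the Hölder-type inequality $|\la\rho,a\ra| \leq \|\rho\|_1\|a\|$. The only real content is bookkeeping — tracking how the errors $\delta \to \delta_1 \to \delta_2$ accumulate and verifying at each stage that \Cref{lem:fix-sdp} is applicable — together with the two off-the-shelf analytic inputs (trace-norm density of the finite-rank operators, and finite-support approximability of $\ell_2$-vectors with the stated trace-norm bound on the pure-state error). If anything is delicate it is making sure the constant $C = 2r\|a\|$ from \Cref{lem:fix-sdp} is uniform so that $\delta_2$ really does tend to $0$; since $C$ depends only on $a$ and $r$, not on the particular $\rho$, this causes no trouble.
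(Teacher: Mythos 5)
Your proposal is correct and follows essentially the same route as the paper's own argument: a trace-norm approximation by a finite-rank operator, then eigenvector truncation to get finite support, with \Cref{lem:fix-sdp} restoring exact feasibility after each step and the same error bookkeeping ($\delta \to \delta_1 \to \delta_2$, with the identical constants and the same citations for trace-norm density and the pure-state fidelity bound). Nothing is missing.
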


\paragraph{Rank~$\mathbf{r}$.}
In general, any solution~$\rho$ to (SDP-P) need not have rank more than~$r$.
Although we don't strictly need this fact, it is particularly easy to show for~$\rho$ of finite support, and we do so now.
\begin{proposition}   \label{prop:rank-r}
    Let $\hat{\rho}$ be a feasible solution to \textnormal{(SDP-P)}, with nonzero entries only in rows or columns indexed by a finite set $F \subseteq V$.
    Then there is another $\tilde{\rho}$ feasible for \textnormal{(SDP-P)}, supported on~$F$ and with rank at most~$r$ such that $\la \tilde{\rho}, a \ra \geq \la \hat{\rho}, a \ra$.
\end{proposition}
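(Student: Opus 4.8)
The plan is to argue by iterated rank reduction, in the spirit of the Barvinok--Pataki bound for semidefinite programs. Since $\hat{\rho}$ has all of its nonzero entries confined to the finite index set $F$, we may work entirely inside the finite-dimensional Hilbert space $\ell_2(F)$: there $\hat{\rho}$ is an honest positive semidefinite matrix of some finite rank $k$, and every operator we produce below will again be supported on $F$, so that positive semidefiniteness on $\ell_2(V)$ is the same as on $\ell_2(F)$. If $k \le r$ we are already done, so assume $k > r$, let $P$ be the orthogonal projection onto $\mathrm{range}(\hat{\rho}) \subseteq \ell_2(F)$ (a $k$-dimensional space), and let us show how to produce a feasible $\rho^{(1)}$, still supported on $F$, with $\la \rho^{(1)}, a \ra \geq \la \hat{\rho}, a \ra$ and $\mathrm{rank}(\rho^{(1)}) < k$; iterating then yields~$\tilde{\rho}$.

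First I would produce a perturbation direction. The real vector space of Hermitian operators $\Delta$ with $P\Delta P = \Delta$ has dimension $k^2$, whereas the requirements $\la \Delta, I_j \ra = 0$ for $j = 1,\dots,r$ are only $r$ real linear constraints (and they automatically force $\tr \Delta = \sum_j \la \Delta, I_j\ra = 0$, since $\sum_j I_j = \Id$). As $k > r \ge 1$ gives $k^2 > r$, there is a nonzero Hermitian $\Delta$ with $P\Delta P = \Delta$ and $\la \Delta, I_j \ra = 0$ for all $j$. Because $\hat{\rho} + t\Delta$ agrees with $\hat{\rho}$ off of $\mathrm{range}(\hat{\rho})$, its positive semidefiniteness is equivalent to that of the restriction $(\hat{\rho} + t\Delta)|_{\mathrm{range}(\hat{\rho})}$ on the finite-dimensional space $\mathrm{range}(\hat{\rho})$, and on that space $\hat{\rho}|_{\mathrm{range}(\hat{\rho})}$ is strictly positive definite.

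The crux is choosing the sign of $\Delta$ correctly, and this is the step I expect to require the most care. I claim one can pick $\Delta' \in \{\Delta, -\Delta\}$ so that \emph{simultaneously} (i) $\la \Delta', a \ra \ge 0$ and (ii) $\Delta'|_{\mathrm{range}(\hat{\rho})}$ has a strictly negative eigenvalue. If $\Delta|_{\mathrm{range}(\hat{\rho})}$ is indefinite, both signs satisfy (ii), so take whichever satisfies (i). If instead $\Delta|_{\mathrm{range}(\hat{\rho})} \pgeq 0$, then $\hat{\rho} + t\Delta \pgeq 0$ for every $t \ge 0$; since $\tr(\hat{\rho} + t\Delta) = 1$ we get $\la \hat{\rho} + t\Delta, a \ra \le \|a\|$ for all $t \ge 0$, which forces $\la \Delta, a \ra \le 0$, so $\Delta' = -\Delta$ works (it is negative semidefinite and nonzero, hence has a negative eigenvalue). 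The case $\Delta|_{\mathrm{range}(\hat{\rho})} \pleq 0$ is symmetric. Given such a $\Delta'$, property (ii) makes $t^* = \sup\{t \ge 0 : \hat{\rho} + t\Delta' \pgeq 0\}$ finite, while strict positive-definiteness of $\hat{\rho}|_{\mathrm{range}(\hat{\rho})}$ makes $t^* > 0$. Set $\rho^{(1)} = \hat{\rho} + t^* \Delta'$. It is positive semidefinite (the cone is closed) and supported on $F$; it is feasible, since $\la \rho^{(1)}, I_j \ra = \tfrac1r + 0 = \tfrac1r$; its objective is $\la \rho^{(1)}, a \ra = \la \hat{\rho}, a \ra + t^* \la \Delta', a \ra \ge \la \hat{\rho}, a \ra$ by (i); and at $t = t^*$ the matrix $(\hat{\rho} + t^*\Delta')|_{\mathrm{range}(\hat{\rho})}$ has acquired a kernel vector, so $\mathrm{rank}(\rho^{(1)}) < k$.

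Each such step strictly decreases the integer rank while preserving feasibility and not decreasing the objective, so after finitely many steps we reach a feasible $\tilde{\rho}$, supported on $F$, of rank at most $r$, as required. (In fact the only place $k > r$ entered was to guarantee $k^2 > r$, so the same argument shows one can drive the rank down to at most $\lfloor \sqrt{r} \rfloor$; but we do not need this.) The one genuinely delicate point is the sign selection in the previous paragraph: one must rule out that the available perturbation keeps $\hat{\rho} + t\Delta$ positive semidefinite for all $t \ge 0$ \emph{and} strictly increases the objective, and this is exactly where the a priori bound $\la \rho, a \ra \le \tr(\rho)\|a\| = \|a\|$ on the feasible region is used.
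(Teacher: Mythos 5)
Your proof is correct, but it takes a genuinely different route from the paper's. The paper diagonalizes $\hat{\rho} = \sum_i \mu_i \ketbra{\psi_i}{\psi_i}$, freezes the eigenvectors, and treats the eigenvalue coefficients as the variables of a \emph{linear} program with nonnegativity and the $r$ equality constraints $\sum_i \lambda_i \braket{\psi_i|I_j|\psi_i} = \tfrac1r$; since summing those constraints gives $\sum_i \lambda_i = 1$, the feasible region is a polytope, an optimal vertex has at most $r$ nonzero coordinates, and one immediately reads off a rank-$\le r$ solution whose range is spanned by a subset of the original eigenvectors. You instead run the Barvinok--Pataki facial rank-reduction directly on the SDP: find a constraint-annihilating Hermitian perturbation inside the face of the PSD cone through $\hat{\rho}$, fix its sign, and walk to the boundary of the cone. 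Both arguments are sound, and your case analysis at the delicate sign-selection step is complete --- in particular, using the a priori bound $\la \rho, a\ra \le \tr(\rho)\|a\| = \|a\|$ to rule out an improving ray along which $\hat{\rho}+t\Delta$ stays feasible for all $t \ge 0$ is exactly the right move, and it plays the same role that boundedness of the feasible polytope plays (implicitly) in the paper's LP argument. The trade-off: the paper's proof is shorter because it black-boxes the vertex structure of finite LPs, while yours is more self-contained and, as you note, actually delivers the stronger conclusion $\mathrm{rank} \le \lfloor \sqrt{r}\rfloor$, since the dimension count only requires $k^2 > r$ rather than $k > r$. Neither proof needs more than rank $r$ for the application, so the paper's choice is just one of economy.
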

\begin{proof}
    We think of $\hat{\rho}$ as a matrix indexed just by~$\hat{F}$, and write $a_{\hat{F}}$ for the submatrix of $a$ on rows/columns indexed by~$\hat{F}$.
    Suppose $\hat{\rho}$ has eigenvalues $\mu_1, \ldots, \mu_{|\hat{F}|} \geq 0$ (nonnegative, since $\hat{\rho} \geq 0$), with corresponding orthonormal eigenvectors $\ket{\psi_1}, \ldots, \ket{\psi_{|\hat{F}|}}$.
    Then $\hat{\rho}$ achieves objective value
    \[
        c \coloneqq \la \hat{\rho}, a \ra = \la \hat{\rho}, a_{\hat{F}} \ra = \tr(\hat{\rho} a_{\hat{F}}) = \tr\parens*{\parens*{\sum_i \mu_i \ketbra{\psi_i}{\psi_i}} a_{\hat{F}}} = \sum_i \mu_i \braket{\psi_i|a_{\hat{F}}|\psi_i}.
    \]
    Also, writing $I_j$ for the identity matrix restricted to $V_j \cap \hat{F}$, feasibility of~$\rho$ implies
    \[
        \tfrac1r = \la \hat{\rho}, I_j \ra = \sum_i \mu_i \braket{\psi_i|I_j|\psi_i}, \quad j = 1 \dots r.
    \]
    Now if we imagine $\ket{\psi_1}, \ldots, \ket{\psi_{|\hat{F}|}}$ are fixed, and $\lambda_1, \dots, \lambda_{|\hat{F}|}$ are real variables, the above tells us that the following linear program ---
    \begin{equation}
        \max_{\lambda \in \R^{\geq 0}_{|\hat{F}|}} \left\{ \sum_i \lambda_i \braket{\psi_i|a_{\hat{F}}|\psi_i} : \sum_i \lambda_i \braket{\psi_i|I_j|\psi_i} = \tfrac1r,\ j = 1 \dots r\right\}
        \tag{LP}
    \end{equation}
    --- has optimal value at least~$c$, since we may take $\lambda = \mu$.
    By standard theory of finite linear programs, the optimal value occurs at a vertex where $|\hat{F}|$ linearly independent constraints are tight.
    Since there are only $r$ equality constraints in~(LP), there is an optimal solution~$\lambda^*$ with at least $|\hat{F}|-r$ zero entries.
    Thus $\tilde{\rho} = \sum_i \lambda_i^*  \ketbra{\psi_i}{\psi_i}$ has rank at most~$r$, achieves objective value at least~$c$, and is feasible for~(SDP-P).
\end{proof}

\subsection{Conclusion for matrix edge-weighted graphs} \label{sec:dual-wrapup}

Let $V_\infty$ be a countable set of nodes and let $G_\infty$ be a bounded-degree graph on~$V_\infty$ with matrix edge-weights from~$\C^{r \times r}$.
Let $A_\infty$ be the adjacency operator for $G_\infty$, acting on $\ell_2(V_\infty) \otimes \C^r$ and assumed self-adjoint; we may think of it as an infinite matrix with rows and columns indexed by~$V_\infty$, and with entries from~$\C^{r \times r}$.
As in \Cref{def:extension}, we write $\wt{A}_\infty$ for its ``extension''; this is a self-adjoint bounded operator on $\ell_2(V_\infty \times [r])$.
Conversely, given any solution $\hat{\rho}$ for (SDP-P) from \Cref{sec:sdp-duality}, we may ``unextend'' it and think of it as an infinite matrix~$\rho$ with rows and columns indexed by~$V_\infty$, and entries from $\C^{r \times r}$.

We apply the the SDP duality theory from \Cref{sec:sdp-duality,sec:dual-finite}, with ``$V$'' being $V_\infty \times [r]$, ``$a$'' being $\wt{A}_\infty$, and ``$V_j$'' being $V_\infty \times \{j\}$ for all $1 \leq j \leq r$. 

The conclusion is that for any $\eps > 0$, there exists:
\begin{itemize}
    \item  $\hat{\oldu} \in \R^r$ with $\avg_j(\hat{\oldu}_j) = 0$;
    \item a finite subset~$F \subset V_\infty$;
    \item a PSD matrix $\rho$ (the ``unextension'' of~$\tilde{\rho}$ of rank at most~$r$ from \Cref{prop:rank-r}) with rows/columns indexed by~$V_\infty$ and entries from $\C^{r \times r}$, supported on the rows/columns~$F$,
        with
        \[
            \tr(\rho)_{jj} = \tfrac1r, \quad j = 1 \dots r;
        \]
\end{itemize}
such that for
\[
    \hat{A} = \wt{A}_\infty + \Id_{V_\infty} \otimes \diag(\hat{\oldu}),
\]
we have
\begin{equation}    \label{eqn:all}
    s^* \coloneqq \lambda_{\max}(\hat{A}) = \GenEigDual(A_\infty) = \GenEig(A_\infty) \geq \la \rho, A_\infty \ra = \la \rho_F, A_F \ra \geq s^* - \eps,
\end{equation}
where $\rho_F, A_F$ denote $\rho, A_\infty$ (respectively) restricted to the rows/columns~$F$.

\section{The SDP value of random matrix polynomial lifts}\label{sec:theproof}
In this section we prove our main \Cref{thm:ourmain}.
To that end, let $p$ be any self-adjoint matrix polynomial over indeterminates $Y_1, \ldots, Y_d, Z_1, \ldots , Z^*_e$ with $r \times r$ coefficients.
Let $A_\infty = p(\calL_\infty)$ denote the adjacency operator of the infinite lift $\mathfrak{X}_p$, and write $s^* = \GenEig(A_\infty) = \GenEigDual(A_\infty)$ as in
\Cref{eqn:all}.
Fix any $\eps, \beta > 0$, and let $\bA_n = p(\bcalL)$ denote the adjacency matrix of a corresponding $n$-lift, formed from $\bcalL = (\bM_1, \dots, \bM_d, \bP_1, \dots, \bP_e)$.
Our goal is to show that except with probability at most~$\beta$ (assuming $n$ is sufficiently large),
\[
    s^* - \eps \leq \GW(\bA_n) = \GWDual(\bA_n) \leq s^* + \eps.
\]

Given our setup, the upper bound follows easily from prior work, namely \Cref{thm:bc-with-edge-signing}.
Let $\hat{\oldu}$ and $\hat{A}$ be as in \Cref{sec:dual-wrapup}, and consider the matrix polynomial $p'$ defined by
\[
    p' = p + \diag(\hat{\oldu}) \bbone.
\]
Then on one hand, the $\infty$-lift of~$p'$ has adjacency operator precisely~$\hat{A}$; on the other hand,
\[
    \bA'_n \coloneqq p'(\bcalL) = \bA_n + \Id_{n\times n} \otimes \diag(\hat{\oldu}).
\]
Thus \Cref{thm:bc-with-edge-signing} tells us that except with probability~$\beta/2$ (provided~$n$ is large enough), the spectra
$\spec(\bA'_n)$ and $\spec(\hat{A})$ are at Hausdorff distance at most~$\eps$, from which it follows that
\[
    \lambda_{\max}(\bA'_n) \leq \lambda_{\max}(\hat{A}) + \eps = s^* + \eps.
\]
But this indeed proves $\GWDual(\bA_n) \leq s^* + \eps$, because $\hat{\oldu}$ has $\avg(\hat{\oldu}) = 0$ and hence is feasible for  $\GWDual(\bA_n)$.

It therefore remains to prove $\GW(\bA_n) \geq s^* - \eps$.

\subsection{A lower bound on the basic SDP value}

In this section we complete the proof of our main theorem by showing that $\GW(\bA_n) \geq s^* - \eps$ except with probability at most $o(1) = o_{n \to \infty}(1)$ (which is at most $\beta/2$ as needed, provided $n$ is large enough).

Let $F$, $\rho$, $\rho_F$, $A_F$ be as in \Cref{sec:dual-wrapup}, except with that section's ``$\eps$'' replaced by $\eps/2$, so that $\la \rho_F, A_F \ra \geq s^* - \eps/2$.
Adding finitely many vertices to~$F$ if necessary, we may assume that it consists of all reduced words over $Y_1, \dots, Y_d, Z_1, \dots, Z_e^*$ of length at most some finite~$f_0$.
We also make the following definition:
\begin{definition}[Cycle in a lift]
    Given an $n$-lift $\calL$, a \emph{cycle of length~$\ell > 0 $} is a pair $(i,w)$, where $i \in [n]$ and $w$ is a reduced word of length~$\ell$ such that $\calL^w \ket{i} = \pm \ket{i}$, and $\braket{i | \calL^{w'}| i} = 0$ (i.e., $\calL^{w'} \ket{i} \neq \pm \ket{i}$) for all proper prefixes~$w'$ of~$w$.
\end{definition}
We will employ the following basic random graph result, \cite[Lem.~23]{BC19}, stated in our language:
\begin{lemma}                                       \label{lem:23}
    For the random $n$-lift $\bcalL$, the expected number of cycles of length~$\ell$ is $O(\ell (d+2e-1)^\ell)$.
\end{lemma}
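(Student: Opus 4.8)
The plan is a first-moment (union) bound. By linearity of expectation,
\[
  \mathbb{E}\bigl[\#\{\text{cycles of length } \ell\}\bigr] \;=\; \sum_{i \in [n]}\ \sum_{w}\ \Pr_{\bcalL}\bigl[(i,w)\ \text{is a cycle}\bigr],
\]
where the inner sum ranges over reduced words $w$ of length exactly $\ell$ on the alphabet $Y_1,\dots,Y_d,Z_1,\dots,Z_e,Z_1^*,\dots,Z_e^*$ (of size $d+2e$). I would first bound the number of such words by $(d+2e)(d+2e-1)^{\ell-1}=O\bigl((d+2e-1)^{\ell}\bigr)$: the initial letter is arbitrary, and every subsequent letter must avoid exactly the one letter --- the formal inverse of its predecessor --- that would create a non-reduced pair $Y_jY_j$ or $Z_jZ_j^*$ or $Z_j^*Z_j$. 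So it suffices to show that for each fixed $i\in[n]$ and each fixed reduced word $w$ of length $\ell$,
\[
  \Pr_{\bcalL}\bigl[(i,w)\ \text{is a cycle}\bigr]\ \le\ \Pr_{\bcalL}\bigl[\bcalL^{w}\ket{i}=\pm\ket{i}\bigr]\ =\ O(\ell/n),
\]
after which multiplying the three estimates gives $O\bigl(\ell\,(d+2e-1)^{\ell}\bigr)$.

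For the probability estimate I would trace the walk $i=v_0\to v_1\to\cdots\to v_\ell$ obtained by applying $w_1,\dots,w_\ell$ to $\ket i$ one step at a time, revealing the random matchings $\bM_j$ and permutations $\bP_j$ only as they are first queried. (The $\pm1$ signs can be ignored: whether $\bcalL^w\ket i=\pm\ket i$ depends only on the underlying unsigned walk.) At step $k$ we stand at $v_{k-1}$ and follow its $w_k$-edge; this edge is either already revealed, so $v_k$ is forced, or fresh, in which case $v_k$ is uniform over at least $n-2\ell$ vertices --- a matching query excludes the current vertex, a permutation query additionally pins one inverse value, and at most $2\ell$ entries of any $\bM_j$ or $\bP_j$ are ever revealed --- so $\Pr[v_k=u\mid\text{history}]\le 1/(n-2\ell)$ for any \emph{fixed} target $u$. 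Now $v_\ell=v_0$ forces a ``back edge'': either step $\ell$ is fresh and lands on $v_0$ (probability $\le 1/(n-2\ell)$), or the $w_\ell$-edge was already revealed, in which case --- using that $w$ is reduced (so $w_\ell$ cannot merely undo $w_{\ell-1}$) and that $(i,w)$ is primitive (so $v_0$ occurs only at times $0$ and $\ell$) --- that edge must have been traversed at step $1$, forcing $v_1=v_{\ell-1}$. Iterating this dichotomy inward on $(v_2,v_{\ell-2}),(v_3,v_{\ell-3}),\dots$ shows the walk can close only because some fresh query, at one of at most $\lceil\ell/2\rceil$ steps, hits a vertex already pinned down by the past; a union bound over these steps gives $\Pr[\bcalL^w\ket i=\pm\ket i]\le\lceil\ell/2\rceil/(n-2\ell)=O(\ell/n)$ for $n$ large.

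Combining, $\mathbb{E}[\#\{\text{cycles of length }\ell\}]\le n\cdot O((d+2e-1)^{\ell})\cdot O(\ell/n)=O(\ell\,(d+2e-1)^{\ell})$. The hard part will be the probability bound: one must set up the reveal-as-you-go exploration so that every fresh query is genuinely (near-)uniform conditioned on everything exposed so far, and one must exploit the reducedness and primitivity of $(i,w)$ to prevent the walk from closing ``for free'' except through a short inward cascade, so that the union bound costs a factor $\Theta(\ell)$ rather than $\Theta(\ell^2)$. The word-counting and the combination step are then routine.
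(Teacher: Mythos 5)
Your skeleton is the right one (and, for the record, the paper does not prove this lemma itself --- it is quoted verbatim from \cite[Lem.~23]{BC19}, so there is no internal proof to compare against): linearity of expectation, the count $(d+2e)(d+2e-1)^{\ell-1}$ of reduced words, and a reveal-as-you-go exploration reducing everything to a per-pair bound of $O(\ell/n)$. The tree argument showing that a primitive closed walk forces at least one ``coincidence'' (a fresh query landing on an already-visited vertex) is also fine. The gap is in the inward cascade. The dichotomy is sound only at the outermost level $k=1$: if the final edge was revealed earlier it must have been revealed at step $1$, because primitivity guarantees $v_0$ is visited only at times $0$ and $\ell$. Primitivity says nothing about revisits to $v_1,v_2,\dots$, and the iteration breaks exactly there. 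Concretely, take $\ell=8$ and the reduced word $w = s_1s_2s_3s_4\,s_2s_3s_4\,s_1^{-1}$ (e.g.\ $s_1=s_3=Z_1$, $s_2=s_4=Z_2$) with the realization $v_0v_1v_2v_3v_1v_2v_3v_1v_0$: steps $1$--$3$ grow a path, step $4$ is a fresh query landing on $v_1$, steps $5$--$7$ retrace the triangle $v_1v_2v_3$, and step $8$ retraces $e_1$. This is a primitive cycle occurring with probability $\Theta(1/n)$, yet none of your $\lceil\ell/2\rceil$ events ``step $\ell-k+1$ is fresh and lands on $v_{k-1}$'' occurs: step $7$ is not fresh, but its edge was revealed at step $4$, not step $2$, so the forced conclusion $v_{\ell-2}=v_2$ is false; the one genuine coincidence lives at step $4$ with target $v_1$, a (step, target) pair outside your list. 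Hence the union bound as set up does not dominate the event.

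What your argument does correctly deliver is the cruder estimate: the first coincidence occurs at some step $\tau\le\ell$ and hits one of at most $\tau$ exposed vertices, giving $O(\ell^2/n)$ per pair and $O(\ell^2(d+2e-1)^\ell)$ overall. That weaker bound actually suffices for how the lemma is used in the paper (\Cref{cor:23} only needs the expectation to be bounded independently of $n$ for $\ell\le f=O(1)$), but it is not the lemma as stated. To recover the factor $\ell$ you need a finer accounting of where the first coincidence can land: for instance, condition on the step $\tau$ of the first coincidence and note that if no second coincidence occurs, every step after $\tau$ must traverse an already-revealed edge (a fresh non-coincidence step moves to a brand-new vertex from which the walk can never return without another coincidence), so the landing vertex of the step-$\tau$ coincidence is pinned down by $w$ and $\tau$ alone; summing over $\tau$ gives $O(\ell/n)$, and the event of two or more coincidences is separately $O(\ell^4/n^2)$, negligible for $\ell$ in the relevant range.
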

Applying this for all $\ell \leq f \coloneqq 2f_0 + \deg(p)$ and using Markov's inequality, we conclude:
\begin{corollary}                                       \label{cor:23}
    Except with probability at most $n^{-.99}$, the random $n$-lift $\bcalL$ has at most $O(n^{.99})$ cycles of length at most~$f$.
    In this case, we can exclude a set of ``bad'' vertices $B \subseteq [n]$ with $|B|/n \leq O(n^{-.01}) = o(1)$ so that:
    \[
        \forall i \not \in B, \quad \forall \text{ reduced words~$w$ with } 0 < |w| \leq 2f_0 + \deg(p), \qquad \braket{i |\bcalL^w|i} = 0.
    \]
\end{corollary}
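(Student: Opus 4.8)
The plan is to combine \Cref{lem:23} with Markov's inequality and then clean up by removing a small set of vertices. First I would observe that by \Cref{lem:23}, for each fixed $\ell$ the expected number of cycles of length $\ell$ in $\bcalL$ is $O(\ell(d+2e-1)^\ell)$, which is a constant depending only on $\ell$ (and the fixed polynomial~$p$). Summing over all $\ell$ with $1 \le \ell \le f = 2f_0 + \deg(p)$—a finite range, since $f_0$ and $\deg(p)$ are fixed once $p$ and the target accuracy are fixed—the expected total number of cycles of length at most~$f$ is $O_f(1)$, i.e.\ a constant $C_0$ independent of~$n$. By Markov's inequality, the probability that $\bcalL$ has more than $C_0 \cdot n^{.99}$ cycles of length at most~$f$ is at most $C_0/(C_0 n^{.99}) = n^{-.99}$, which gives the claimed probability bound (for $n$ large enough that the implied constants work out; one could also just state the bound as $O(n^{-.99})$).

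Next I would handle the ``bad vertex'' extraction. Condition on the event that there are at most $O(n^{.99})$ cycles of length at most~$f$. Each such cycle is a pair $(i,w)$ with $i \in [n]$; let $B \subseteq [n]$ be the set of all first coordinates~$i$ appearing in some cycle of length at most~$f$. Then $|B|$ is at most the number of cycles, so $|B| \le O(n^{.99})$ and hence $|B|/n \le O(n^{-.01}) = o(1)$. The point is now that for $i \notin B$ and any reduced word $w$ with $0 < |w| \le f$, we cannot have $\bcalL^w\ket{i} = \pm\ket{i}$: if we did, then taking $w$ to be a shortest-length reduced word with this property (note the property is witnessed by \emph{some} such $w$ of length $\le f$, and reduced words of a given length form a finite set, so a minimal one exists), the pair $(i,w)$ would by minimality satisfy $\braket{i|\bcalL^{w'}|i} = 0$ for all proper prefixes $w'$, making $(i,w)$ a genuine cycle of length $\le f$, contradicting $i \notin B$. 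Since $\bcalL^w\ket{i}$ is always $\pm\ket{j}$ for some $j$ (as $\bcalL^w$ is a signed permutation matrix, being a product of signed permutations/matchings), $\bcalL^w\ket{i} \ne \pm\ket{i}$ is equivalent to $\braket{i|\bcalL^w|i} = 0$, which is exactly the stated conclusion.

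I don't anticipate a serious obstacle here; the one point requiring a little care is the minimality argument showing that the absence of $i$ from~$B$ really does preclude \emph{all} short return words, not just those that happen to be ``first returns''—this is handled by the shortest-word reduction above, using that $\bcalL^w$ is a signed permutation so that a return of $\ket{i}$ to $\pm\ket{i}$ along a prefix would only shorten the witnessing word. A second minor bookkeeping point is making sure the range $\ell \le f$ is the right one: we need it large enough that later, when we paste $\rho_F$ (supported on reduced words of length $\le f_0$) into $\bG_n$ and evaluate $\la \rho_F, \bA_F\ra$, all the relevant word-lengths—at most $2f_0$ from the two endpoints of $\rho_F$ plus up to $\deg(p)$ from an edge of~$\bA_n$—are covered, which is exactly why $f = 2f_0 + \deg(p)$ is chosen. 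Both are routine; the substance is entirely in \Cref{lem:23} plus Markov.
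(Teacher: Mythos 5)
Your proof is correct and follows the same route as the paper, which simply invokes \Cref{lem:23} for each $\ell \leq f$ together with Markov's inequality (the paper does not even spell out the bad-vertex extraction). Your additional care with the shortest-witness argument — showing that excluding only the ``first-return'' cycle starting points already kills \emph{all} short return words, because a shorter nonzero return would itself be a cycle — is exactly the right way to fill in the detail the paper leaves implicit.
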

We henceforth fix an outcome $\bcalL = \calL$ (and hence $\bA_n = A_n$) such that the conclusion of \Cref{cor:23} holds, accruing our $o(1)$ probability of failure.
Under this assumption, we will show $\GW(A_n) \geq s^* - \eps/2 - o(1)$, which is sufficient to complete the proof.\\

Our plan will be to first construct a ``provisional'' near-feasible PSD solution~$\sigma$ for $\GW(A_n)$, with rows/columns indexed by~$[n]$ and with $r \times r$ entries,  such that:
\begin{itemize}
    \item $|\la \sigma, A_n \ra - \la \rho_F, A_F \ra| \leq o(1)$, and hence $\la \sigma, A_n \ra \geq s^* - \eps/2 - o(1)$;
    \item for $i \not \in B$, the $r \times r$ matrix $\sigma_{ii}$ has diagonal entries~$\tfrac1{nr}$.
\end{itemize}
Then, we will show how to ``fix'' $\sigma$ to a some $\sigma'$ that is truly feasible for $\GW(A_n)$, while still having $\la \sigma', A_n \ra \geq \la \sigma, A_n \ra - o(1) \geq s^* - \eps/2 - o(1)$.

\subsubsection{Constructing a near-feasible solution}
\begin{definition}
    For each $i \in [n]$, we define a linear operator $\Phi_i : \C^F \to \C^n$ by
    \[
        \Phi_i = \sum_{v \in F} \calL^v \ketbra{i}{v},
    \]
    and also  $\wt{\Phi}_i = \Phi_i \otimes \Id_{r \times r}$.
    We furthermore define $\sigma_i$ to be the $n \times n$ matrix with $r \times r$ entries whose extension~$\wt{\sigma}_i$ is
    \[
        \wt{\sigma}_i = \wt{\Phi}_i  \cdot \wt{\rho}_F \cdot \wt{\Phi}_i^\transp.
    \]
\end{definition}
\begin{remark}
    $\wt{\sigma}_i$ is PSD, being the conjugation by $\wt{\Phi}_i$ of the PSD operator $\wt{\rho}_F$.
\end{remark}
\begin{proposition}                                     \label{prop:isomorphism}
    If $i \not \in B$, then $\wt{\Phi}_i^\transp \wt{A}_n \wt{\Phi}_i = \wt{A}_F$.
\end{proposition}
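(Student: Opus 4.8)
The plan is to show that $\Phi_i$, restricted appropriately, acts as a graph isomorphism from the ball of radius $f_0$ around the root $\bbi$ in $\mathfrak{X}_p$ onto the ball of radius $f_0$ around $i$ in the finite lift graph $G_n$, and that this isomorphism carries the relevant submatrix of $A_\infty$ to the corresponding submatrix of $A_n$. Concretely, I would first record that $\Phi_i \ket{v} = \calL^v \ket{i}$ for $v \in F$, and that the condition $i \notin B$ from \Cref{cor:23} says exactly that $\calL^w \ket{i} \neq \pm \ket{i}$ for every nonempty reduced word $w$ of length at most $2f_0 + \deg(p)$. I would use this to deduce two things: (a) the vectors $\{\calL^v \ket{i} : v \in F\}$ are, up to sign, distinct standard basis vectors of $\C^n$ — so $\Phi_i$ is (up to the signs) a partial permutation, i.e. $\Phi_i^\transp \Phi_i = \Id_F$; and (b) for any two reduced words $v, v' \in F$ and any word $u$ appearing in $p$ (so $|u| \le \deg p$), we have $\braket{i | \calL^v{}^\transp \calL^u \calL^{v'} | i} = \braket{\bbi | (\calL_\infty^v)^\transp \calL_\infty^u \calL_\infty^{v'} | \bbi}$, because the reduced form of $(v)^{-1} u v'$ has length at most $2f_0 + \deg(p)$, so it evaluates to $\pm \ket{i}$ in the lift iff it is the empty word iff it evaluates to $\ket{\bbi}$ in $\mathfrak{X}_p$, and the signs match because the free-group/reduced-word bookkeeping is identical on both sides.

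Given (a) and (b), the computation is essentially bookkeeping. Writing $A_n = \sum_u \calL^u \otimes a_u$ and $A_\infty = \sum_u \calL_\infty^u \otimes a_u$ over the words $u$ of $p$, I would expand
\[
    \wt{\Phi}_i^\transp \wt{A}_n \wt{\Phi}_i = \sum_u (\Phi_i^\transp \calL^u \Phi_i) \otimes a_u,
\]
and then show entrywise that $\Phi_i^\transp \calL^u \Phi_i = (\Phi_\infty)^\transp \calL_\infty^u \Phi_\infty$, where $\Phi_\infty = \sum_{v \in F} \calL_\infty^v \ketbra{\bbi}{v}$ is the ``root'' inclusion; the right-hand side, summed against $a_u$, is by definition $\wt{A}_F$, the restriction of $\wt{A}_\infty$ to $F$. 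The $(v, v')$ entry of $\Phi_i^\transp \calL^u \Phi_i$ is exactly $\braket{i | (\calL^v)^\transp \calL^u \calL^{v'} | i}$, which by (b) equals the $(v,v')$ entry of the corresponding operator on $\mathfrak{X}_p$. One subtle point to handle carefully: since $F$ consists of all reduced words of length $\le f_0$ and $u$ has length $\le \deg p$, the word $\calL^u \calL^{v'}\ket{i}$ may land on a vertex at distance up to $f_0 + \deg p$ from $i$, which need not lie in $\Phi_i(F)$; but that is fine — such a contribution is killed by $\Phi_i^\transp$ on the left only if the target is not in $\Phi_i(F)$, and symmetrically on the right, and in all cases the inner product $\braket{i|\cdots|i}$ being computed only involves words of total length $\le 2f_0 + \deg p$, which is the regime controlled by \Cref{cor:23}. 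So no cancellation is lost or spuriously introduced.

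The main obstacle — really the only place any care is needed — is the sign tracking in step (b): the lift matrices $\calL^v$ carry $\pm 1$ signs, so $\calL^v\ket{i} = s_{i,v}\ket{\pi(v)}$ for signs $s_{i,v}$, and one must check that the signs multiply correctly so that $\braket{i|(\calL^v)^\transp \calL^u \calL^{v'}|i}$ reproduces the (signed or unsigned) weight in $\mathfrak{X}_p$ rather than its negative. The clean way to see this is to note that the signs form a cocycle: concatenation of edges multiplies signs, and any closed walk of length $\le 2f_0 + \deg p$ at $i$ that is \emph{nontrivial} as a reduced word contributes $0$ by \Cref{cor:23}, while a walk that reduces to the empty word has sign $+1$ (it backtracks, and $M_i M_i = \Id$, $P_i P_i^* = \Id$). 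Thus only the trivial-word contributions survive on both sides, and they carry sign $+1$ on both sides, giving the claimed equality. Everything else is a direct Kronecker-product expansion.
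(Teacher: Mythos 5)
Your proof is correct and follows essentially the same route as the paper's: expand $\wt{\Phi}_i^\transp \wt{A}_n \wt{\Phi}_i$ entrywise, observe that the $(v,v')$ entry is a sum of terms $\braket{i|\calL^{v^* w v'}|i}\,a_w$ over words of total length at most $2f_0+\deg(p)$, and invoke \Cref{cor:23} to kill every term whose word does not reduce to $\bbi$. Your extra care about the signs of the surviving trivial-word terms (backtracking pairs evaluate to $+\Id$) is a point the paper leaves implicit, but it is the same argument.
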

\begin{proof}
    Thinking of $\wt{\Phi}_i^\transp \wt{A}_n \wt{\Phi}_i$ as an $F \times F$ matrix of $r \times r$ matrices, it follows that its $(u,v)$ entry is given by
    \[
        \sum_{\text{term } a_w w \text{ in } p} \braket{i|\calL^{v^* w u}|i} a_w.
    \]
    On the other hand, the $(u,v)$ entry of $A_F$ is by definition
    \[
        \sum_{\text{term } a_w w \text{ in } p} 1[v^* w u = \bbi] a_w,
    \]
    where ``$v^* w u = \bbi$'' denotes that the reduced form of word $v^* w u$ is the empty word.
    We therefore have equality for all $u,v$ provided $\braket{i|\calL^{v^* w u}|i} = 0$ whenever $v^* w u \neq \emptyset$.
    But \Cref{cor:23} tells us this indeed holds for $i \not \in B$, because $|v^*wu| \leq 2f_0 + \deg(p)$.
\end{proof}
\begin{corollary}                                       \label{cor:equal-val}
    For $i \not \in B$ we have $\la \sigma_i, A_n \ra = \la \rho_F, A_F \ra$.
\end{corollary}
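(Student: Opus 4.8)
The plan is to simply combine the definition of $\sigma_i$ with \Cref{prop:isomorphism} and the cyclic invariance of the trace. Recall that by definition $\la \sigma_i, A_n \ra$ means $\la \wt{\sigma}_i, \wt{A}_n \ra = \tr(\wt{\sigma}_i \wt{A}_n)$, and that $\wt{\sigma}_i = \wt{\Phi}_i \wt{\rho}_F \wt{\Phi}_i^\transp$.

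First I would substitute this expression for $\wt{\sigma}_i$ and push $\wt{\Phi}_i^\transp$ around the trace:
\[
    \la \sigma_i, A_n \ra = \tr\bigl(\wt{\Phi}_i \wt{\rho}_F \wt{\Phi}_i^\transp \wt{A}_n\bigr) = \tr\bigl(\wt{\rho}_F \, \wt{\Phi}_i^\transp \wt{A}_n \wt{\Phi}_i\bigr).
\]
Here the only subtlety is that we are permuting factors inside a trace of operators on an infinite-dimensional space, so one should note that $\wt{\rho}_F$ is finite-rank (being supported on the finite set $F$) and $\wt{\Phi}_i, \wt{A}_n$ are bounded, so all the products appearing are trace-class and the cyclic identity $\tr(BC) = \tr(CB)$ is legitimate; in fact everything can be regarded as a finite matrix computation after restricting to $F$, since $\wt{\Phi}_i$ only reads off the coordinates of $F$.

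Next I would invoke \Cref{prop:isomorphism}, valid precisely because $i \not\in B$, to replace $\wt{\Phi}_i^\transp \wt{A}_n \wt{\Phi}_i$ by $\wt{A}_F$, giving $\la \sigma_i, A_n \ra = \tr(\wt{\rho}_F \wt{A}_F) = \la \wt{\rho}_F, \wt{A}_F \ra = \la \rho_F, A_F \ra$, which is the claim.

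I do not expect any real obstacle here: the content of the corollary has already been isolated into \Cref{prop:isomorphism}, and what remains is the bookkeeping of conjugation-invariance of the inner product $\la \cdot, \cdot\ra = \tr(\cdot\,\cdot)$. The mildest point to be careful about is just confirming that passing between the ``matrix-of-blocks'' notation and the ``extension'' notation $\wt{(\cdot)}$ is consistent — i.e.\ that $\la \sigma_i, A_n\ra$ and $\la \rho_F, A_F\ra$ are by convention the traces of the corresponding extensions — but this is exactly the remark on notation for matrix-weighted graphs made in \Cref{sec:prelims}.
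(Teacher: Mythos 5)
Your proposal is correct and is essentially identical to the paper's proof: substitute the definition of $\wt{\sigma}_i$, use cyclicity of the trace, and apply \Cref{prop:isomorphism}. (The worry about infinite-dimensional trace manipulations is unnecessary here, since $\wt{A}_n$ is a finite $nr \times nr$ matrix, but it does no harm.)
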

\begin{proof}
    When $i \not \in B$,
    \[
        \la \sigma_i, A_n \ra = \tr(\wt{\sigma}_i \wt{A}_n) = \tr(\wt{\Phi}_i  \wt{\rho}_F \wt{\Phi}_i^\transp \wt{A}_n) =
        \tr(\wt{\rho}_F \wt{\Phi}_i^\transp \wt{A}_n\wt{\Phi}_i) = \tr(\wt{\rho}_F \wt{A}_F) =  \la \rho_F, A_F \ra,
    \]
    where the last equality used \Cref{prop:isomorphism}
\end{proof}

We now define our ``provisional'' SDP solution~$\sigma$ via
\[
    \sigma = \avg_{i \in [n]} \{\sigma_i\};
\]
this is indeed PSD, being the average of PSD operators.
Using \Cref{cor:equal-val}, $|B|/n = o(1)$, and the fact that $|\la \sigma_i, A_n \ra| \leq O(1)$ for every~$i$ (since $\sigma_i$ only has $O(1)$ nonzero entries, each bounded in magnitude by~$O(1)$), we conclude:
\begin{proposition}                                     \label{prop:val}
    $|\la \sigma, A_n\ra - \la \rho_F, A_F \ra| \leq o(1)$, and hence $\la \sigma, A_n\ra  \geq s^* - \eps/2 - o(1)$.
\end{proposition}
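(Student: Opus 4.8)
The plan is a one-line averaging argument, exploiting linearity of $\la \cdot, A_n \ra$ in the first slot together with the definition $\sigma = \avg_{i \in [n]} \{\sigma_i\}$. Separating the index set $[n]$ into the ``good'' vertices $[n] \setminus B$ and the ``bad'' vertices $B$, I would write
\[
    \la \sigma, A_n \ra = \frac1n \sum_{i \notin B} \la \sigma_i, A_n \ra + \frac1n\sum_{i \in B} \la \sigma_i, A_n \ra ,
\]
and estimate the two sums separately.

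For the good sum, \Cref{cor:equal-val} supplies the exact identity $\la \sigma_i, A_n \ra = \la \rho_F, A_F \ra$ for every $i \notin B$, so this contribution is exactly $\frac{n - |B|}{n}\la \rho_F, A_F \ra$. Since $F$, $\rho_F$ and $A_F$ were fixed before $n$, the scalar $\la \rho_F, A_F \ra$ is an absolute constant, and \Cref{cor:23} gives $|B|/n = o(1)$; hence the good sum equals $\la \rho_F, A_F \ra - o(1)$. For the bad sum, each $\sigma_i$ is (the unextension of) $\wt{\Phi}_i \wt{\rho}_F \wt{\Phi}_i^\transp$, where $\Phi_i$ is supported on the fixed finite set $F$ and $\rho_F$ is a fixed finitely-supported matrix with $O(1)$-sized entries; consequently $\sigma_i$ has only $O(1)$ nonzero $r \times r$ entries, each of magnitude $O(1)$. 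Since the entries of $A_n$ are likewise bounded by an absolute constant (as $p$ is fixed and each matching/permutation operator has norm $1$), it follows that $|\la \sigma_i, A_n \ra| \le O(1)$ uniformly in $i$, so the bad sum has absolute value at most $\frac{|B|}{n}\cdot O(1) = o(1)$.

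Putting the two estimates together yields $|\la \sigma, A_n \ra - \la \rho_F, A_F \ra| \le o(1)$, which is the first claim; the second then follows immediately, since $\rho_F$ was chosen (applying \Cref{sec:dual-wrapup} with that section's ``$\eps$'' taken to be $\eps/2$) to satisfy $\la \rho_F, A_F \ra \ge s^* - \eps/2$. I do not expect any real obstacle in this step: it is purely a counting/averaging estimate. The only point worth a moment's care is that every hidden constant --- the cardinality of $F$, the entry sizes of $\rho_F$, and the degree and coefficient sizes of $p$ --- is independent of $n$, so that the ``$O(1)$'' bound on $|\la \sigma_i, A_n \ra|$ does not secretly grow with $n$; this is clear because $p$, $F$ and $\rho_F$ are all fixed before $n \to \infty$.
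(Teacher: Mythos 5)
Your proposal is correct and is exactly the paper's argument: the paper likewise splits the average over $i \in [n]$ into the good vertices (where \Cref{cor:equal-val} gives $\la \sigma_i, A_n \ra = \la \rho_F, A_F \ra$ exactly) and the $o(1)$ fraction of bad vertices (handled by the uniform bound $|\la \sigma_i, A_n \ra| \leq O(1)$ coming from the finite support and bounded entries of $\sigma_i$). Your added remark that all hidden constants depend only on $p$, $F$, and $\rho_F$ and not on $n$ is the right point of care, and matches the paper's implicit reasoning.
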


Now similar to \Cref{prop:iso2} we have the following:
\begin{proposition}                                     \label{prop:iso2}
    If $j \not \in B$, then the $(j,j)$ entry of $\sigma$ is $\tfrac1n \tr(\rho_F)$ (and hence is an $r \times r$ matrix with diagonal entries equal to $\tfrac{1}{nr}$).
\end{proposition}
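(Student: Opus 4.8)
The plan is to unwind the block-matrix definition of $\sigma = \avg_{i\in[n]}\{\sigma_i\}$ and compute its $(j,j)$ entry directly, in the same spirit as the proof of \Cref{prop:isomorphism}. From $\wt{\sigma}_i = \wt{\Phi}_i\wt{\rho}_F\wt{\Phi}_i^\transp$ and $\Phi_i = \sum_{v\in F}\calL^v\ketbra{i}{v}$, the $(j,j)$ block (an $r\times r$ matrix) of $\sigma_i$ works out to $\sum_{v,v'\in F}\braket{j|\calL^v|i}\,\braket{j|\calL^{v'}|i}\,(\rho_F)_{v,v'}$, where $(\rho_F)_{v,v'}\in\C^{r\times r}$ denotes the corresponding block of $\rho_F$ and each $\braket{j|\calL^v|i}\in\{0,\pm1\}$. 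Averaging over $i$, I would use $\sum_{i\in[n]}\ketbra{i}{i}=\Id_n$ to collapse the product of the two scalar prefactors into $\braket{j|\calL^v(\calL^{v'})^\transp|j}$, and then $(\calL^{v'})^\transp = \calL^{(v')^*}$ (which holds because each substituted generator $M_i,P_i$ is real orthogonal and each $M_i$ is symmetric) to rewrite this as $\braket{j|\calL^{v(v')^*}|j}$.

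The crux is then to show that for $j\notin B$ this scalar equals $1[v=v']$. When $v=v'$ we have $\calL^v(\calL^v)^\transp=\calL^v(\calL^v)^{-1}=\Id$ exactly (each generator substitution is orthogonal), so the scalar is $1$. When $v\neq v'$, in the free-product group $V_\infty$ the reduced form of $v(v')^*$ is a nonempty reduced word of length at most $|v|+|v'|\le 2f_0\le 2f_0+\deg(p)$; since $M_i^2=\Id$ and $P_iP_i^*=\Id$ hold exactly, $\calL^{v(v')^*}$ equals $\calL^{\mathrm{red}(v(v')^*)}$ (up to a sign that is immaterial here), so \Cref{cor:23} gives $\braket{j|\calL^{v(v')^*}|j}=0$. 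Substituting this back, the $(j,j)$ block of $n\sigma$ is $\sum_{v\in F}(\rho_F)_{v,v}=\tr(\rho_F)$, hence the $(j,j)$ block of $\sigma$ is $\tfrac1n\tr(\rho_F)$.

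Finally I would record the ``diagonal entries $\tfrac{1}{nr}$'' claim: since $\rho$ is supported on $F$ we have $\tr(\rho_F)=\tr(\rho)$, and by the construction in \Cref{sec:dual-wrapup} this $r\times r$ matrix satisfies $\tr(\rho)_{\ell\ell}=\tfrac1r$ for every $\ell\in[r]$; therefore $\tfrac1n\tr(\rho_F)$ has all diagonal entries equal to $\tfrac{1}{nr}$. I do not expect a genuine obstacle here: once the block structure is written out this is the same linear-algebra identity underlying \Cref{prop:isomorphism} and \Cref{cor:equal-val}. The only points needing care are the sign bookkeeping (handled by orthogonality of the generator matrices, with the $v=v'$ case the one where exactness actually matters) and checking the length bound $2f_0\le 2f_0+\deg(p)$ so that \Cref{cor:23} is applicable.
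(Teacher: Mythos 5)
Your proposal is correct and follows essentially the same route as the paper's proof: expand the $(j,j)$ block of $\avg_i \sigma_i$ as a double sum over $u,v \in F$, collapse the average over $i$ via $\sum_i \ketbra{i}{i} = \Id$ into $\braket{j|\calL^{uv^*}|j}$, and invoke \Cref{cor:23} (with the length bound $|uv^*| \le 2f_0$) to kill all off-diagonal terms. Your extra care about the sign bookkeeping and the exactness of the $v=v'$ case is a welcome elaboration but not a different argument.
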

\begin{proof}
    By definition, the $(j,j)$ entry of $\sigma$ is
    \begin{align*}
        {}&= \avg_{i \in [n]} \sum_{u,v \in F} \braket{j|\calL^u|i} \braket{u|\rho_F|v} \braket{i | \calL^{v^*}|j}\\
        &= \tfrac1n \sum_{u,v \in F} \braket{u|\rho_F|v} \cdot \sum_{i \in [n]} \braket{j|\calL^u|i}  \braket{i | \calL^{v^*}|j} \\
        &= \tfrac1n \sum_{u,v \in F} \braket{u|\rho_F|v} \cdot  \braket{j|\calL^{uv^*}|j}
        \tag{since $\sum_i \ketbra{i}{i} = \Id$}
    \end{align*}
    where we are writing $\braket{u|\rho_F|v}$ for the $r \times r$ matrix at the $(u,v)$ entry of~$\rho_F$.
    Now when $j \not \in B$, we have that $\braket{j|\calL^{vw^*}|j} = 1[vw^* = \emptyset]$ by \Cref{cor:23}, since $|vw^*| \leq 2f_0$.
    Thus all summands above drop out, except for the ones with $u = v$; this indeed gives $\tfrac1n \tr(\rho_F)$.
\end{proof}

\subsubsection{Fixing $\sigma$}
    Finally, we slightly fix $\sigma$ to make it truly feasible for $\GW(A)$.
    Let $\sigma'$ be the $n \times n$ matrix, with entries from $\C^{r \times r}$, defined as follows:
    \[
        \sigma'_{ij} = \begin{cases}
                                    \sigma_{ij} & \text{if $i,j \not \in B$,} \\
                                    \tfrac{1}{nr} \Id_{r \times r} & \text{if $i = j \in B$,} \\
                                    0 & \text{else.}
                             \end{cases}
    \]
    This $\sigma'$ is easily seen to be PSD, being a principal submatrix of the PSD matrix~$\sigma$, direct-summed with the PSD matrix $\tfrac{1}{nr}\Id_{r \times r}$.
    As well, the  $nr \times nr$ extension matrix $\wt{\sigma}'$ has all diagonal entries equal to~$\tfrac{1}{nr}$, by \Cref{prop:iso2}.
    Thus $\wt{\sigma}'$ is feasible for $\GW(A_n) = \GW(\wt{A}_n)$, and it remains for us to show that
     \begin{equation}  \label[ineq]{ineq:finish}
        \la \sigma, A_n \ra - \la \sigma', A_n \ra \leq o(1);
    \end{equation}
    this will imply $\la \sigma', A_n \ra \geq s^* - \eps/2 - o(1)$ by \Cref{prop:val}, and hence $\GW(A_n) \geq s^* - \eps$ (for sufficiently large~$n$), as desired.

    We have
    \[
       \la \sigma, A_n \ra - \la \sigma', A_n \ra = \la \sigma - \sigma', A_n \ra \leq \|\wt{\sigma} - \wt{\sigma}'\|_1 \|\wt{A}_n\|_\infty.
    \]
    Next,
    \[
        \|\wt{A}_n\|_\infty = \|p(\calL)\|_\infty = \|\sum_{w} \calL^w \otimes a_w\|_\infty \leq \sum_{w} \|\calL^w\|_\infty \cdot \|a_w\|_\infty = \sum_w \|a_w\|_\infty \leq O(1).
    \]
    Here the final equality is because each $\calL^w$ is a signed permutation matrix (hence has $\|\calL^w\|_\infty = 1$), and the final inequality is because~$p$ has only constantly many coefficients, of constant size.
    Thus to establish \Cref{ineq:finish}, it remains to show $\|\wt{\sigma} - \wt{\sigma}'\|_1 \leq o(1)$.

    Define the orthogonal projection matrices $\Pi_B = \sum_{i \in B} \ketbra{i}{i} \otimes \Id_{r \times r}$ and similarly $\Pi_{\ol{B}}$, where $\ol{B} = [n] \setminus B$.
    Observe that
    \[
        \sigma' = \tfrac{1}{nr} \Pi_B + \Pi_{\ol{B}} \sigma \Pi_{\ol{B}},
    \]
    and thus
    \[
        \|\wt{\sigma} - \wt{\sigma}'\|_1 = \|\wt{\sigma} - \wt{\Pi}_{\ol{B}} \wt{\sigma} \wt{\Pi}_{\ol{B}} - \tfrac{1}{nr} \wt{\Pi}_B\|_1 \leq \|\wt{\sigma} - \wt{\Pi}_{\ol{B}} \wt{\sigma} \wt{\Pi}_{\ol{B}}\|_1 + \tfrac{1}{nr} \|\wt{\Pi}_B\|_1.
    \]
    But $\tfrac{1}{nr} \|\wt{\Pi}_B\|_1 = \tfrac{|B|}{n} = o(1)$, so it remains to show
    \[
        \|\wt{\sigma} - \wt{\Pi}_{\ol{B}} \wt{\sigma} \wt{\Pi}_{\ol{B}}\|_1 \leq o(1).
    \]
    Note that $\wt{\sigma} \in \C^{nr \times nr}$ is nearly a density matrix: it is PSD, and all but a $1 - o(1)$ fraction of its diagonal entries are~$\tfrac{1}{nr}$, with the remaining ones being bounded in magnitude by $O(\tfrac{1}{n})$.
    Thus $\tr(\wt{\sigma}) = 1 \pm o(1)$, and we can therefore scale $\wt{\sigma}$ by a $1\pm o(1)$ factor to produce a true density matrix~$\hat{\sigma}$.
    Clearly it now suffices to show
    \[
        \|\hat{\sigma} - \wt{\Pi}_{\ol{B}} \hat{\sigma} \wt{\Pi}_{\ol{B}}\|_1 \leq o(1).
    \]
    But this follows from Winter's Gentle Measurement Lemma~\cite[Lem.~9]{Win99}, which bounds the  quantity on the left by $\sqrt{8\lambda}$, where $\lambda = 1 - \tr(\hat{\sigma} \wt{\Pi}_{\ol{B}}) = o(1)$.
    This completes the proof.

\bibliographystyle{alpha}
\bibliography{arxiv}

\newcommand{\etalchar}[1]{$^{#1}$}
\begin{thebibliography}{DMO{\etalchar{+}}19}

\bibitem[AA18]{AA18}
Scott Aaronson and Andris Ambainis.
\newblock Forrelation: a problem that optimally separates quantum from
  classical computing.
\newblock {\em SIAM J. Comput.}, 47(3):982--1038, 2018.

\bibitem[BBK{\etalchar{+}}20]{BBKMW20}
Afonso~S Bandeira, Jess Banks, Dmitriy Kunisky, Cristopher Moore, and
  Alexander~S Wein.
\newblock Spectral planting and the hardness of refuting cuts, colorability,
  and communities in random graphs.
\newblock Technical Report 2008.12237, arXiv, 2020.

\bibitem[BC19]{BC19}
Charles Bordenave and Beno\^{\i}t Collins.
\newblock Eigenvalues of random lifts and polynomials of random permutation
  matrices.
\newblock {\em Annals of Mathematics}, 190(3):811--875, 2019.

\bibitem[BGT13]{BGT13}
Mohsen Bayati, David Gamarnik, and Prasad Tetali.
\newblock Combinatorial approach to the interpolation method and scaling limits
  in sparse random graphs.
\newblock {\em Annals of Probability}, 41(6):4080--4115, 2013.

\bibitem[Bop87]{Bop87}
Ravi Boppana.
\newblock Eigenvalues and graph bisection: An average-case analysis.
\newblock In {\em Proceedings of the 28th Annual IEEE Symposium on Foundations
  of Computer Science}, pages 280--285, 1987.

\bibitem[Bor20]{Bor20}
Charles Bordenave.
\newblock A new proof of {F}riedman's second eigenvalue theorem and its
  extension to random lifts.
\newblock {\em Ann. Sci. \'{E}c. Norm. Sup\'{e}r. (4)}, 53(6):1393--1439, 2020.

\bibitem[BR02]{BR02}
Ola Bratelli and Derek Robinson.
\newblock {\em Operator algebras and quantum statistical mechanics}.
\newblock Springer, 2002.

\bibitem[BS00]{BS00}
J.~Fr\'{e}d\'{e}ric Bonnans and Alexander Shapiro.
\newblock {\em Perturbation Analysis of Optimization Problems}.
\newblock Springer Series in Operations Research. Springer-Verlag, New York,
  2000.

\bibitem[CGHV15]{CGHV15}
Endre Cs\'{o}ka, Bal\'{a}zs Gerencs\'{e}r, Viktor Harangi, and B\'{a}lint
  Vir\'{a}g.
\newblock Invariant {G}aussian processes and independent sets on regular graphs
  of large girth.
\newblock {\em Random Structures Algorithms}, 47(2):284--303, 2015.

\bibitem[CHSH69]{CHSH69}
John Clauser, Michael Horne, Abner Shimony, and Richard Holt.
\newblock Proposed experiment to test local hidden-variable theories.
\newblock {\em Physical Review Letters}, 23(15):880--884, 1969.

\bibitem[CHTW04]{CHTW04}
Richard Cleve, Peter H{\o}yer, Benjamin Toner, and John Watrous.
\newblock Consequences and limits of nonlocal strategies.
\newblock In {\em Proceedings of the 19th Annual Computational Complexity
  Conference}, pages 236--249. IEEE, 2004.

\bibitem[Con85]{Con85}
John~B. Conway.
\newblock {\em The Theory of Subnormal Operators}.
\newblock American Mathematical Society, 1985.

\bibitem[Con90]{Con90}
John~B. Conway.
\newblock {\em A Course in Functional Analysis}.
\newblock Springer, 1990.

\bibitem[DH73]{DH73}
Wilm Donath and Alan Hoffman.
\newblock Lower bounds for the partitioning of graphs.
\newblock {\em IBM J. Res. Develop.}, 17:420--425, 1973.

\bibitem[DLS14]{DSS14}
Amit Daniely, Nati Linial, and Shai Shalev{-}Shwartz.
\newblock From average case complexity to improper learning complexity.
\newblock In {\em Proceedings of the 46th Annual ACM Symposium on Theory of
  Computing}, pages 441--448, 2014.

\bibitem[DMO{\etalchar{+}}19]{DMOSS19}
Yash Deshpande, Andrea Montanari, Ryan O'Donnell, Tselil Schramm, and
  Subhabrata Sen.
\newblock The threshold for {SDP}-refutation of random regular {NAE-3SAT}.
\newblock In {\em Proceedings of the 30th Annual ACM-SIAM Symposium on Discrete
  Algorithms}, pages 2305--2321. SIAM, 2019.

\bibitem[DMS{\etalchar{+}}17]{DMS17}
Amir Dembo, Andrea Montanari, Subhabrata Sen, et~al.
\newblock Extremal cuts of sparse random graphs.
\newblock {\em Annals of Probability}, 45(2):1190--1217, 2017.

\bibitem[DP93]{DP93}
Charles Delorme and Svatopluk Poljak.
\newblock Laplacian eigenvalues and the maximum cut problem.
\newblock {\em Mathematical Programming}, 62:557--574, 1993.

\bibitem[Elo09]{Elo09}
Yehonatan Elon.
\newblock Gaussian waves on the regular tree.
\newblock Technical Report 0907.5065, arXiv, 2009.

\bibitem[FL92]{FL92}
Uriel Feige and L{\'a}szl{\'o} Lov{\'a}sz.
\newblock Two-prover one-round proof systems: Their power and their problems.
\newblock In {\em Proceedings of the 24th Annual ACM Symposium on Theory of
  Computing}, pages 733--744, 1992.

\bibitem[Fri08]{Fri08}
Joel Friedman.
\newblock A proof of {A}lon's second eigenvalue conjecture and related
  problems.
\newblock {\em Memoirs of the American Mathematical Society},
  195(910):viii+100, 2008.

\bibitem[Fri19]{Fri19}
Shmuel Friedland.
\newblock Infinite dimensional generalizations of {C}hoi's theorem.
\newblock {\em Spec. Matrices}, 7:67--77, 2019.

\bibitem[GVK21]{GK21}
Jorge Garza-Vargas and Archit Kulkarni.
\newblock Spectra of infinite graphs via freeness with amalgamation.
\newblock Technical Report 1912.10137, arXiv, 2021.

\bibitem[GW95]{GW95}
Michel Goemans and David Williamson.
\newblock Improved approximation algorithms for maximum cut and satisfiability
  problems using semidefinite programming.
\newblock {\em J. Assoc. Comput. Mach.}, 42(6):1115--1145, 1995.

\bibitem[HV15]{HV15}
Viktor Harangi and B\'{a}lint Vir\'{a}g.
\newblock Independence ratio and random eigenvectors in transitive graphs.
\newblock {\em Ann. Probab.}, 43(5):2810--2840, 2015.

\bibitem[JLS20]{JLS20}
Aayush Jain, Huijia Lin, and Amit Sahai.
\newblock Indistinguishability {O}bfuscation from well-founded assumptions.
\newblock Technical Report 2008.09317, arXiv, 2020.

\bibitem[Lan17]{Lan17}
Klaas Landsman.
\newblock {\em Foundations of Quantum Theory}.
\newblock Springer Open, 2017.

\bibitem[Leh99]{Leh99}
Franz Lehner.
\newblock Computing norms of free operators with matrix coefficients.
\newblock {\em Amer. J. Math.}, 121(3):453--486, 1999.

\bibitem[Mey06]{Mey94}
Paul-Andr\'{e} Meyer.
\newblock {\em Quantum probability for probabilists}.
\newblock Springer, second edition, 2006.

\bibitem[MM09]{MM09}
Marc M\'{e}zard and Andrea Montanari.
\newblock {\em Information, physics, and computation}.
\newblock Oxford Graduate Texts. Oxford University Press, Oxford, 2009.

\bibitem[Mon21]{Mon21}
Andrea Montanari.
\newblock Optimization of the {S}herrington--{K}irkpatrick hamiltonian.
\newblock {\em SIAM Journal on Computing}, FOCS19, 2021.

\bibitem[MOP20]{MOP20}
Sidhanth Mohanty, Ryan O'Donnell, and Pedro Paredes.
\newblock The {SDP} value for random two-eigenvalue {CSP}s.
\newblock In Christophe Paul and Markus Bl{\"a}ser, editors, {\em Proceedings
  of the 37th Annual Symposium on Theoretical Aspects of Computer Science},
  volume 154 of {\em Leibniz International Proceedings in Informatics
  (LIPIcs)}, pages 50:1--50:45, Dagstuhl, Germany, 2020. Schloss
  Dagstuhl--Leibniz-Zentrum fuer Informatik.

\bibitem[MS16]{MS16}
Andrea Montanari and Subhabrata Sen.
\newblock Semidefinite programs on sparse random graphs and their application
  to community detection.
\newblock In {\em Proceedings of the 48th Annual ACM Symposium on Theory of
  Computing}, pages 814--827. ACM, New York, 2016.

\bibitem[NC10]{NC10}
Michael Nielsen and Isaac Chung.
\newblock {\em Quantum Computation and Quantum Information}.
\newblock Cambridge University Press, 10th anniversary edition edition, 2010.

\bibitem[OW20]{OW20}
Ryan O'Donnell and Xinyu Wu.
\newblock Explicit near-fully {X}-{R}amanujan graphs.
\newblock In {\em Proceedings of the 61st Annual IEEE Symposium on Foundations
  of Computer Science}, pages 1045--1056, 2020.

\bibitem[PR95]{PR95}
Svatopluk Poljak and Franz Rendl.
\newblock Nonpolyhedral relaxations of graph-bisection problems.
\newblock {\em SIAM Journal on Optimization}, 5(3):467--487, 1995.

\bibitem[Rag09]{Rag09}
Prasad Raghavendra.
\newblock {\em Approximating {NP}-hard problems: efficient algorithms and their
  limits}.
\newblock PhD thesis, University of Washington, 2009.

\bibitem[RS80]{RS80}
Michael Reed and Barry Simon.
\newblock {\em Functional Analysis}.
\newblock Academic Press, 1980.

\bibitem[RVW02]{RVW02}
Omer Reingold, Salil Vadhan, and Avi Wigderson.
\newblock Entropy waves, the zig-zag graph product, and new constant-degree
  expanders.
\newblock {\em Ann. of Math. (2)}, 155(1):157--187, 2002.

\bibitem[RW95]{RW95}
Franz Rendl and Henry Wolkowicz.
\newblock A projection technique for partitioning the nodes of a graph.
\newblock {\em Ann. Oper. Res.}, 58:155--179, 1995.
\newblock Applied mathematical programming and modeling, II (APMOD 93)
  (Budapest, 1993).

\bibitem[SH08]{SH08}
Maksim Shirokov and Alexander Holevo.
\newblock On the approximation of infinite-dimensional quantum channels.
\newblock {\em Problemy Peredachi Informatsii}, 44(2):3--22, 2008.

\bibitem[Sty73]{Sty73}
George Styan.
\newblock Hadamard products and multivariate statistical analysis.
\newblock {\em Linear Algebra Appl.}, 6:217--240, 1973.

\bibitem[vDE20]{DE20}
Anne van Delft and Michael Eichler.
\newblock A note on {H}erglotz's theorem for time series on function spaces.
\newblock {\em Stochastic Process. Appl.}, 130(6):3687--3710, 2020.

\bibitem[Win99]{Win99}
Andreas Winter.
\newblock Coding theorem and strong converse for quantum channels.
\newblock {\em Institute of Electrical and Electronics Engineers. Transactions
  on Information Theory}, 45(7):2481--2485, 1999.

\end{thebibliography}

\end{document}